\documentclass[12pt,a4paper]{article}

\usepackage{iftex}
\ifPDFTeX
	\usepackage{cmap}
	\usepackage[T1]{fontenc}
	\usepackage[utf8]{inputenc}
	\usepackage[english]{babel}
\else
    \usepackage{fontspec}
    \usepackage[english]{babel}
\fi

\usepackage{amsmath,amssymb,amsfonts,amsthm,mathtools}

\usepackage{geometry}
\usepackage{booktabs}
\usepackage{array}
\usepackage{hyperref}
\hypersetup{
	colorlinks=true,
	linkcolor=blue,
	citecolor=blue,
	urlcolor=blue
}

\numberwithin{equation}{section}
\theoremstyle{plain}
\newtheorem{theorem}{Theorem}[section]
\newtheorem{lemma}{Lemma}[section]
\newtheorem{proposition}{Proposition}[section]

\theoremstyle{definition}
\newtheorem{definition}{Definition}[section]
\newtheorem{example}{Example}[section]
\newtheorem{algorithm}{Algorithm}[section]

\theoremstyle{remark}
\newtheorem{remark}{Remark}[section]

\DeclareMathOperator{\ISM}{ISM}
\DeclareMathOperator{\dist}{dist}
\DeclareMathOperator{\diam}{diam}
\DeclareMathOperator{\sgn}{sgn}

\DeclareMathOperator{\tridiag}{tridiag}
\DeclareMathOperator{\diag}{diag}
\DeclareMathOperator{\sn}{sn}
\DeclareMathOperator{\cn}{cn}
\DeclareMathOperator{\dn}{dn}

\title{\textbf{Optimization Inverse Spectral Problems for Sturm--Liouville Operators\\with Potentials in $L^p(0,1)$}}
\author{Nur~Valeev\\ \small \textit{Institute of Mathematics with Computing Centre, Ufa Federal Research Centre}\\[-0.2ex]
\small \textit{Russian Academy of Sciences}\\ \small \texttt{valeevnf@yandex.ru}}
\date{}

\begin{document}
	
	\maketitle
	
	\begin{abstract}
		We study an optimization inverse spectral problem (OISP) for a one-dimensional Sturm--Liouville operator on a finite interval. The problem is to find a potential $V \in L^p(0,1)$ that realizes a prescribed finite set $\lambda_1^* < \lambda_2^* < \dots < \lambda_m^*$ and has the smallest $L^p(0,1)$-distance from an a priori given potential $V_0 \in L^p(0,1)$.
		
		We formulate and prove results on the regularity of the isospectral manifold, the existence of an OISP solution, and an explicit analytic representation of the optimal potential by means of a power map applied to a linear combination of squared eigenfunctions.
		
		Using a unified differential-algebraic approach to squared eigenfunctions, we show that whenever these squares are linearly dependent, the corresponding potential can be expressed in terms of them and is an analytic function.
		
		For $p=2$, we prove that nondegenerate stationary points are isolated and that the number of global minima is finite provided that each of them is nondegenerate. For every $1<p<\infty$, we establish uniqueness for the first eigenvalue when $\lambda_1^*\ge\lambda_1(V_0)$ and generic global uniqueness in the sense of Baire category in $L^p(0,1)$.
		
		A separate section provides examples with at least two global minima for $m=1,2$ in the presence of a sufficiently high finite symmetric barrier. We describe an iteration with frozen eigenfunctions and a local Newton--KKT method with continuation in the spectral data, and state explicitly the limits of their guarantees.
		
		The final part discusses applications of OISP theory to nonlinear boundary value problems and to applied problems of diagnostics and optimal spectral design of acoustic and electromagnetic resonators.
		
		\medskip
		\noindent\textbf{Keywords:} optimization inverse spectral problem, Sturm--Liouville operator, isospectral manifold, metric projection in $L^p$, squared eigenfunctions, system of nonlinear Schr\"odinger equations, generic uniqueness, numerical methods, spectral design of resonators.
		
		\medskip
		\noindent\textbf{MSC 2020:} 34A55, 34L05, 34B15, 49J20, 49K20, 65L09.
	\end{abstract}
	
	\setcounter{section}{-1}
	
	\section{Introduction}
	
	The Sturm--Liouville operator (the one-dimensional stationary Schr\"odinger operator)
	\begin{equation}\label{eq:intro_op}
		\mathcal{L}[V]\psi := -\psi''(x) + V(x)\psi(x)
	\end{equation}
	is one of the fundamental models of mathematical physics, spectral theory, and vibration mechanics. In quantum mechanics, its eigenvalues are interpreted as energy levels of stationary states and its eigenfunctions as the corresponding wave functions; in vibration theory and acoustics, they describe squared natural frequencies and standing-wave modes in inhomogeneous media. Sturm--Liouville operators provide model examples for numerous problems of mathematical physics, including inverse spectral theory.
	
	The classical inverse spectral problem is to recover the potential $V(x)$ and the boundary-condition parameters from a sufficiently complete set of spectral characteristics \cite{Marchenko1977,Poschel1987,Gladwell2004}. Since the pioneering paper of V.~A.~Ambarzumian (1929) \cite{Ambarzumian1929} and the fundamental work of G.~Borg (1946) \cite{Borg1946}, it has been well known that, in general, a single spectrum does not determine the potential uniquely. Borg's uniqueness theorem uses two spectra corresponding to different boundary conditions at one endpoint of the interval \cite{Borg1946}. In the classical works of V.~A.~Marchenko \cite{Marchenko1977} and I.~M.~Gelfand and B.~M.~Levitan \cite{GelfandLevitan1951}, recovery of a potential from the spectral function, or from one spectrum together with a set of norming constants, was reduced to linear integral equations. The global geometry of spectral data and isospectral sets for the Dirichlet problem on an interval was studied in detail by J.~P\"oschel and E.~Trubowitz \cite{Poschel1987}. Completeness and admissibility of spectral data, as well as uniqueness and stability of potential recovery, have remained central questions of the classical theory.
	
	In the overwhelming majority of experimental, engineering, and computational problems, however, only a few of the first (lowest) eigenvalues or resonance frequencies $\lambda_1^*,\dots,\lambda_m^*$, with $m<\infty$, are available \cite{ChuGolub2005,Gladwell2004,IlyasovValeev2021}. In the classical formulation, recovering a potential from a finite set of eigenvalues is strongly underdetermined: the data are realized by an infinite-dimensional family of potentials, namely an isospectral set \cite{Poschel1987,IlyasovValeev2018UMJ,IlyasovValeev2021}. Various approaches to completing finite spectral data by fixing their high-frequency asymptotics or prescribing the potential a priori on part of the domain were considered, in particular, by B.~D.~Lowe, M.~Pilant, and W.~Rundell \cite{Lowe1992}. A related direction is formed by extremal spectral problems, originating in the works of J.~B.~Keller \cite{Keller1961} and M.~S.~Ashbaugh and E.~M.~Harrell \cite{Ashbaugh1987}, which study sharp estimates of eigenvalues and intermodal gaps on balls in $L^p$ and $L^1$ spaces; see Q.~Wei, G.~Meng, and M.~Zhang \cite{WeiMengZhang2009}, Y.~Tian \cite{TianWeiZhang2023}, and M.~Zhao and J.~Qi \cite{QiChen2016,GuoQi2017,ZhaoQi2025}.
	
	Applied inverse problems usually come with a priori information about the system in the form of a known approximate (reference or baseline) potential $V_0(x)$ \cite{ValeevIlyasov2018MZ,IlyasovValeev2021,Sadovnichii2026R3}. Requiring the unknown potential $V(x)$ to deviate minimally from the reference model $V_0(x)$ provides a natural and physically motivated variational principle for selecting one solution among all potentials having the observed part of the spectrum. This leads to the general concept of an \textit{optimization inverse spectral problem (OISP)}, also referred to as an \textit{inverse optimal problem (IOP)} \cite{ValeevIlyasov2018MZ,IlyasovValeev2021,Valeev2026Graph}: given a reference element $V_0$ and a finite collection of spectral measurements $\Lambda^*$, one seeks a potential $\hat V$ minimizing the distance $\|V-V_0\|$ over all admissible potentials whose spectra agree with the data $\Lambda^*$.
	
	The systematic development of OISP theory and its connections with nonlinear analysis was initiated in the works of N.~F.~Valeev and Y.~Sh.~Ilyasov \cite{ValeevIlyasov2018MZ,IlyasovValeev2018UMJ,IlyasovValeev2019JDE,ValeevIlyasov2020UMJ,IlyasovValeev2021}, and in a series of joint studies by V.~A.~Sadovnichii, Ya.~T.~Sultanaev, and N.~F.~Valeev \cite{Sadovnichii2022DE,Sadovnichii2023Dokl,Sadovnichii2024DE,Sadovnichii2026R3}:
	\begin{itemize}
		\item For a one-dimensional Sturm--Liouville operator on an interval, \cite{ValeevIlyasov2018MZ,IlyasovValeev2018UMJ} established solvability of a one-parameter OISP and revealed its fundamental connection with the nonlinear boundary value equation $-u''+V_0(x)u=\lambda u+\sigma u^3$, $\sigma=\pm1$. This made it possible to prove a generalized Sturm nodal theorem for nonlinear boundary value problems.
		\item In \cite{IlyasovValeev2019JDE}, the theory was extended to a multidimensional Schr\"odinger operator on a bounded domain $\Omega\subset\mathbb R^N$. The paper \cite{IlyasovValeev2021} gave a complete solution of the $m$-parameter OISP in $L^2(-l,l)$ for an arbitrary finite set of eigenvalues $\lambda_{i_1}<\dots<\lambda_{i_m}$ and related the necessary optimality conditions to a stationary system of $m$ nonlinearly coupled Schr\"odinger equations.
		\item In works by V.~A.~Sadovnichii, Ya.~T.~Sultanaev, and N.~F.~Valeev, the OISP method was developed for matrix and vector Sturm--Liouville operators \cite{Sadovnichii2022DE,Sadovnichii2023Dokl}, for the one-dimensional Schr\"odinger operator on the whole real line \cite{Sadovnichii2024DE}, and for the three-dimensional Schr\"odinger operator in $L^2(\mathbb R^3)$ \cite{Sadovnichii2026R3}.
		\item N.~F.~Valeev \cite{Valeev2026Graph} developed OISP theory for the Schr\"odinger operator on a compact metric graph. That paper proves nonemptiness of the isospectral set for every strictly increasing set of the first $m$ eigenvalues, represents the optimal potential through squared eigenfunctions on cycle-free graphs (trees), studies linear dependence of squared eigenfunctions on graphs with cycles using the Lam\'e potential on a circle, and obtains a limiting representation at nonregular points.
		\item The development of the OISP method also yielded new results on solvability of problems with a prescribed partial trace \cite{ValeevIlyasov2020UMJ}, existence of solutions to degenerate $(p,q)$-Laplacian equations \cite{IlyasovValeev2025BLMS}, and uniqueness of principal solutions to defocusing Hartree equations \cite{IlyasovSunValeevYao2026}.
	\end{itemize}
	
	This paper studies properties of the OISP for Sturm--Liouville operators on $(0,1)$ throughout the Banach scale $L^p(0,1)$, $1<p<\infty$. Passing beyond the Hilbert space $L^2(0,1)$ is of theoretical interest, because orthogonality to the tangent space of the isospectral manifold is replaced by a nonlinear duality map in a Banach space. It is also of direct practical relevance when modeling restrictions on the peak amplitude, $p\gg2$, or the spatial localization, $1<p<2$, of a correction to the potential. We also investigate isolation of nondegenerate solutions for $p=2$, generic global uniqueness for all $1<p<\infty$, and mechanisms of nonuniqueness. Numerical schemes are considered with their local applicability conditions stated explicitly.
	
	\section{Formulation of the OISP with $V_0(x)\in L^p(0,1)$}
	
	\subsection{The direct spectral problem and the isospectral set}
	
	On the finite interval $(0,1)$, consider the self-adjoint Sturm--Liouville operator $\mathcal L[V]$ generated by the differential expression
	\begin{equation}\label{eq:sl_op}
		\mathcal{L}[V]\psi := -\psi''(x) + V(x)\psi(x), \quad x \in (0,1),
	\end{equation}
	with Dirichlet boundary conditions at the endpoints:
	\begin{equation}\label{eq:dirichlet_bc}
		\psi(0) = 0, \quad \psi(1) = 0.
	\end{equation}
	\begin{remark}
		All subsequent results extend, with straightforward technical modifications, to general separated self-adjoint boundary conditions $\psi(0)\cos\alpha+\psi'(0)\sin\alpha=0$ and $\psi(1)\cos\beta+\psi'(1)\sin\beta=0$ considered in \cite{ValeevIlyasov2018MZ,IlyasovValeev2018UMJ}.
	\end{remark}
	
	Let $1<p<\infty$, and let the potential $V$ belong to the real Banach space $L^p(0,1):=L^p(0,1;\mathbb R)$ endowed with the norm
	\[
	\|V\|_{L^p} = \left( \int_0^1 |V(x)|^p \, dx \right)^{1/p}.
	\]
	We write $q=\frac{p}{p-1}\in(1,\infty)$ for the exponent conjugate to $p$, so that $1/p+1/q=1$. Since $L^p(0,1)\hookrightarrow L^1(0,1)$ for every $p>1$, the quadratic form
	\begin{equation}\label{eq:quad_form}
		a_V[u, u] = \int_0^1 |u'(x)|^2 \, dx + \int_0^1 V(x)|u(x)|^2 \, dx, \quad u \in W_0^{1,2}(0,1),
	\end{equation}
	is bounded from below and closed on the Sobolev space $W_0^{1,2}(0,1)$ \cite{IlyasovValeev2021,Valeev2026Graph}. The domain of $\mathcal L[V]$ is
	\[
	\mathcal{D}(\mathcal{L}[V]) = \left\{ u \in W_0^{1,2}(0,1) : -u'' + Vu \in L^2(0,1)\ \text{in the sense of distributions} \right\},
	\]
	For $p\ge2$, this domain equals $H^2(0,1)\cap H_0^1(0,1)$. The eigenvalue equation gives $W^{2,p}(0,1)$ regularity for every eigenfunction and every $p>1$. The spectrum is purely discrete and consists of an infinite strictly increasing sequence of simple real eigenvalues \cite{Marchenko1977,IlyasovValeev2021}:
	\begin{equation}\label{eq:spectrum_chain}
		\lambda_1(V) < \lambda_2(V) < \dots < \lambda_k(V) < \dots \to +\infty \quad (k \to \infty).
	\end{equation}
	Each eigenvalue $\lambda_k(V)$ corresponds to a real eigenfunction $\varphi_k(V)(x)$, unique up to sign and normalized in $L^2(0,1)$ by
	\begin{equation}\label{eq:ef_norm}
		\|\varphi_k(V)\|_{L^2(0,1)}^2 = \int_0^1 \varphi_k^2(V)(x) \, dx = 1, \quad \varphi_k'(V)(0) > 0,
	\end{equation}
	which, by Sturm's oscillation theorem, has exactly $k-1$ simple zeros (nodes) in $(0,1)$.
	
	Fix an integer $m\ge1$, indices $1\le i_1<i_2<\dots<i_m$ (for the most important case of the first $m$ eigenvalues, set $i_k=k$, $k=1,\dots,m$), and the vector of target spectral data
	\begin{equation}\label{eq:omega_m}
		\Lambda^* = (\lambda_1^*, \lambda_2^*, \dots, \lambda_m^*)^T \in \Omega_m := \left\{ (\mu_1, \dots, \mu_m)^T \in \mathbb{R}^m : \mu_1 < \mu_2 < \dots < \mu_m \right\}.
	\end{equation}
	Define the spectral map $F_m:L^p(0,1)\to\mathbb R^m$ by $F_m(V)=(\lambda_{i_1}(V),\dots,\lambda_{i_m}(V))^T$, and introduce the \textit{isospectral set} $\ISM_m^p(\Lambda^*)$ in $L^p(0,1)$ \cite{IlyasovValeev2021,Valeev2026Graph}:
	\begin{equation}\label{eq:ism_def}
		\ISM_m^p(\Lambda^*) := \left\{ V \in L^p(0,1) : \lambda_{i_k}(V) = \lambda_k^*, \;\; k = 1, \dots, m \right\}.
	\end{equation}
	
	\subsection{Formulation of the OISP}
	
	Let a reference (a priori) potential $V_0\in L^p(0,1)$ be given.
	
	The \textit{optimization inverse spectral problem $\mathcal P_p(m)$} is to find a potential $\hat V\in\ISM_m^p(\Lambda^*)$ that realizes the distance in $L^p(0,1)$ from $V_0$ to the isospectral set $\ISM_m^p(\Lambda^*)$ \cite{IlyasovValeev2021,Valeev2026Graph}:
	\begin{equation}\label{eq:oisp_dist}
		\|\hat{V} - V_0\|_{L^p(0,1)} = \dist_{L^p}\big(V_0, \ISM_m^p(\Lambda^*)\big) := \inf_{V \in \ISM_m^p(\Lambda^*)} \|V - V_0\|_{L^p(0,1)}.
	\end{equation}
	Since $s\mapsto\frac1p s^p$ is strictly increasing for $s\ge0$, problem \eqref{eq:oisp_dist} is equivalent to the smooth variational problem with equality constraints
	\begin{equation}\label{eq:oisp_var}
		J_p(V) := \frac{1}{p} \int_0^1 |V(x) - V_0(x)|^p \, dx \longrightarrow \min, \quad V \in \ISM_m^p(\Lambda^*).
	\end{equation}
	
	\subsection{The case $p=2$}
	
	For $p=2$, the potential space $X=L^2(0,1)$ is a Hilbert space with inner product $\langle f,g\rangle=\int_0^1 f(x)g(x)\,dx$. We recall the principal results established for this case in \cite{ValeevIlyasov2018MZ,IlyasovValeev2018UMJ,IlyasovValeev2021,Valeev2026Graph}.
	
	\begin{proposition}[Nonemptiness and smooth structure of $\ISM_m^2(\Lambda^*)$ \cite{IlyasovValeev2021,Valeev2026Graph}]\label{prop:p2_manifold}
		Let $\Lambda^*=(\lambda_1^*,\dots,\lambda_m^*)^T\in\Omega_m$ and $1\le i_1<\dots<i_m$. Then:
		\begin{enumerate}
			\item The set $\ISM_m^2(\Lambda^*)$ is nonempty and contains bounded potentials $V\in L^\infty(0,1)$.
			\item The functionals $\lambda_k:L^2(0,1)\to\mathbb R$ are continuously Fr\'echet differentiable, $\lambda_k\in C^1(L^2(0,1))$, and
			\begin{equation}\label{eq:frechet_p2}
				D\lambda_k(V)[h] = \langle \varphi_k^2(V), h \rangle = \int_0^1 \varphi_k^2(V)(x) h(x) \, dx, \quad \forall h \in L^2(0,1).
			\end{equation}
			\item For every potential $V\in L^2(0,1)$, the squared eigenfunctions $\{\varphi_{i_1}^2(V),\dots,\varphi_{i_m}^2(V)\}$ are linearly independent in $L^2(0,1)$, and $\ISM_m^2(\Lambda^*)$ is a smooth submanifold of codimension $m$ in $L^2(0,1)$.
		\end{enumerate}
	\end{proposition}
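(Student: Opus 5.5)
I will prove the three items in the order (2), (3), (1): differentiability is needed for the manifold structure, and the manifold structure (submersion) helps to show nonemptiness.

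\emph{Item 2.} Fix $V\in L^2(0,1)$ and let $\lambda_k(V)$ be a simple eigenvalue with normalized eigenfunction $\varphi_k(V)$. Multiplication by $h\in L^2$ is form-bounded with respect to $-d^2/dx^2$ with relative bound zero, since $|\int h u^2|\le \|h\|_{L^1}\|u\|_\infty^2$ and $\|u\|_\infty^2\le \varepsilon\|u'\|^2+C_\varepsilon\|u\|^2$ on $W_0^{1,2}(0,1)$. Hence $V\mapsto \mathcal L[V]$ is a holomorphic family of type (B) in Kato's sense. By analytic perturbation theory (Riesz projection around the isolated simple eigenvalue), $\lambda_k$ and $\varphi_k$ depend real-analytically on $V$ as maps into $\mathbb R$ and $W_0^{1,2}$, so in particular $\lambda_k\in C^1$. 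The derivative formula is the Hellmann--Feynman identity. Differentiate $a_V[\varphi_k,\varphi_k]=\lambda_k$ together with $\|\varphi_k\|_{L^2}=1$; the terms containing $D\varphi_k$ cancel because $\varphi_k$ is a critical point of the Rayleigh quotient. What remains is $D\lambda_k(V)[h]=\int h\varphi_k^2$. Continuity of $V\mapsto\varphi_k^2\in L^2$ follows from continuity of $V\mapsto\varphi_k\in W^{1,2}\hookrightarrow L^\infty$.

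\emph{Item 3.} The key step, and the main obstacle, is linear independence of $\varphi_{i_1}^2,\dots,\varphi_{i_m}^2$. I would use the classical Borg-type fact that products of eigenfunctions of a Sturm--Liouville operator are linearly independent. For squares, suppose $\sum c_j\varphi_{i_j}^2=0$. Take the largest index $j$ with $c_j\neq 0$ and look at the behavior near $x=0$. Each $\varphi_k$ satisfies $\varphi_k(x)=\varphi_k'(0)x+o(x)$, which does not separate the modes, so a finer argument is needed.

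The route I would follow is this. Products $w=\varphi_k\varphi_l$ satisfy the third-order equation
\[
w'''-4(V-\mu)w'-2V'w=\ldots
\]
This is awkward for $V$ merely in $L^2$, so instead I would use the Wronskian identity $(\varphi_k'\varphi_l-\varphi_k\varphi_l')'=(\lambda_l-\lambda_k)\varphi_k\varphi_l$. Equivalently, the squares $\varphi_k^2$ satisfy
\[
\bigl(\varphi_k^2\bigr)'''=4(V-\lambda_k)(\varphi_k^2)'+2V'\varphi_k^2
\]
in the weak (integrated) form. Write $T_V$ for the integrated operator $g\mapsto -g''/4+ Vg-\tfrac12\int_0^x V' g$, arranged so that $T_V\varphi_k^2=\lambda_k\varphi_k^2$ up to boundary terms. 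Then a linear relation $\sum c_j\varphi_{i_j}^2=0$ is preserved under applying $T_V$ repeatedly, which gives
\[
\sum c_j\lambda_{i_j}^r\varphi_{i_j}^2=0 \quad\text{for } r=0,\dots,m-1.
\]
By the Vandermonde determinant with distinct $\lambda_{i_j}$, each $c_j\varphi_{i_j}^2=0$, and so $c_j=0$. To avoid $V'$, I would work with the weak form tested against $W_0^{1,2}$ functions, which only requires $V\in L^1$.

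Once independence holds, $DF_m(V)=(\langle\varphi_{i_j}^2,\cdot\rangle)_j$ maps onto $\mathbb R^m$, so $F_m$ is a $C^1$ submersion. The regular value theorem in Banach spaces (with closed complement spanned by the $\varphi_{i_j}^2$) then gives that $F_m^{-1}(\Lambda^*)$ is a $C^1$ submanifold of codimension $m$, provided it is nonempty.

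\emph{Item 1.} I would use a continuation argument. $\Omega_m$ is connected, and the image $F_m(L^\infty)$ is open in $\Omega_m$, since a submersion restricted to the dense subspace $L^\infty$ is still onto by choosing $h$ in $\mathrm{span}$ of bounded squares. It remains to show the image is closed in $\Omega_m$. For that I would take $V_n\in L^\infty$ with $F_m(V_n)\to\Lambda\in\Omega_m$ and produce a limit.

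This requires compactness, which is not automatic, so I would instead construct explicit examples. For $m$ prescribed eigenvalues, use piecewise constant potentials with $m$ free step heights and argue by a degree/homotopy argument. Alternatively, cite the finite-data existence result for bounded potentials from \cite{IlyasovValeev2021,Valeev2026Graph}, which is presumably how the paper handles it. Nonemptiness is thus the part I would rely on the cited works for, while items 2 and 3 are proved directly as above.
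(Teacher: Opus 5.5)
Your route for items~2 and~3 is the paper's (cf.\ Lemma~\ref{lem:frechet_indep_lp} with $p=2$). It combines Kato's analytic perturbation theory and the Hellmann--Feynman formula with the relations $\sum_j c_j\mu_j^r\varphi_{i_j}^2\equiv0$ for $r=0,\dots,m-1$ (where $\mu_j:=\lambda_{i_j}(V)$), a Vandermonde determinant, and the regular value theorem. For item~1 the paper, like you, simply cites P\"oschel--Trubowitz or the barrier construction of \cite{Valeev2026Graph}.

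\textbf{The gap is in the decisive step of item~3.} Your operator does \emph{not} satisfy $T_V\varphi_k^2=\lambda_k\varphi_k^2$ up to negligible terms. Read $\int_0^xV'g$ as $Vg-\int_0^xVg'$ (using $g(0)=0$) and use $\varphi_k''=(V-\lambda_k)\varphi_k$. Then
\[
T_V\varphi_k^2=\lambda_k\varphi_k^2-\tfrac12\bigl(\varphi_k'(0)\bigr)^2 ,
\]
and this constant is nonzero for every $k$, because $\varphi_k'(0)\ne0$. So applying $T_V$ to $\sum_jc_j\varphi_{i_j}^2\equiv0$ yields only $\sum_jc_j\mu_j\varphi_{i_j}^2\equiv\frac12\sum_jc_j(\varphi_{i_j}'(0))^2$, a constant. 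Your claim that the relation ``is preserved under applying $T_V$ repeatedly'' is therefore unjustified.

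The missing idea is that the boundary condition must kill this constant. Evaluate the new identity at $x=0$, where all $\varphi_{i_j}$ vanish; this shows the constant is $0$, and only then can you iterate. This is not a formality. As written, your argument never uses the Dirichlet condition, so it would equally ``prove'' independence under periodic conditions, where it is false. For the Lam\'e potential of Example~\ref{ex:lame_graph}, $\dn^2-\cn^2-(1-\kappa^2)\sn^2\equiv0$, while the next weighted sum is the nonzero constant $-(1-\kappa^2)$.

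Once you add that step the proof closes. It is cleaner to drop $T_V$, and with it every appearance of $V'$, in favour of the recurrence of Lemma~\ref{lem:Sr_Tr_identities}. Set $S_r=\sum_jc_j\mu_j^r\varphi_{i_j}^2$ and $T_r=\sum_jc_j\mu_j^r(\varphi_{i_j}')^2$. Then $S_r''=2T_r+2VS_r-2S_{r+1}$ and $T_r'=VS_r'-S_{r+1}'$ a.e. If $S_r\equiv0$, these force $T_r=S_{r+1}$ and $T_r'=-S_{r+1}'$, hence $S_{r+1}\equiv S_{r+1}(0)=0$. Only the equation $\varphi''=(V-\mu)\varphi$ is used, so $V\in L^1$ suffices.

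Two minor points. First, item~2 already gives real analyticity, so the regular value theorem yields a smooth (indeed analytic) submanifold, not merely a $C^1$ one. Second, your open--closed continuation for item~1 is incomplete without a closedness argument, as you noticed, so falling back on the cited existence result is the right choice.
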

	
	\begin{theorem}[Existence and representation of an OISP solution for $p=2$ \cite{IlyasovValeev2021}]\label{thm:p2_summary}
		For every reference potential $V_0\in L^2(0,1)$ and every $\Lambda^*\in\Omega_m$, problem $\mathcal P_2(m)$ in \eqref{eq:oisp_var} has at least one global solution $\hat V\in\ISM_m^2(\Lambda^*)$. At every local or global minimum $\hat V$, there exist uniquely determined Lagrange multipliers $\alpha_1,\dots,\alpha_m\in\mathbb R$ such that
		\begin{equation}\label{eq:p2_rep}
			\hat{V}(x) = V_0(x) - \sum_{k=1}^m \alpha_k \varphi_{i_k}^2(\hat{V})(x) \quad \text{a.e. on } (0,1).
		\end{equation}
		If $\sigma_k=\sgn(\alpha_k)\in\{-1,0,+1\}$ and $\hat u_k(x)=\sqrt{|\alpha_k|}\,\varphi_{i_k}(\hat V)(x)$ (with $\hat u_k=0$ when $\alpha_k=0$), then $(\hat u_1,\dots,\hat u_m)\in(W^{2,2}(0,1)\cap W_0^{1,2}(0,1))^m\subset(C^{3/2}[0,1])^m$ satisfies the stationary system of coupled nonlinear Schr\"odinger equations
		\begin{equation}\label{eq:p2_nls}
			\left\{ \begin{aligned}
				-\hat{u}_k''(x) + V_0(x)\hat{u}_k(x) &= \lambda_k^* \hat{u}_k(x) + \left( \sum_{j=1}^m \sigma_j \hat{u}_j^2(x) \right) \hat{u}_k(x), \quad x \in (0,1), \\
				\hat{u}_k(0) = \hat{u}_k(1) &= 0, \quad k = 1, \dots, m,
			\end{aligned} \right.
		\end{equation}
		The functions $\{\hat u_k\}_{k=1}^m$ are pairwise orthogonal in $L^2(0,1)$; whenever $\alpha_k\ne0$, the function $\hat u_k(x)$ has exactly $i_k-1$ simple zeros in $(0,1)$; and the optimal potential is recovered from
		\begin{equation}\label{eq:p2_pot_u}
			\hat{V}(x) = V_0(x) - \sum_{k=1}^m \sigma_k \hat{u}_k^2(x).
		\end{equation}
	\end{theorem}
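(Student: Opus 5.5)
The plan has three steps: existence by a minimizing sequence, the representation \eqref{eq:p2_rep} by the Lagrange multiplier rule, and then the nonlinear system \eqref{eq:p2_nls} with its properties read off from \eqref{eq:p2_rep}.

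\textbf{Existence.} By Proposition~\ref{prop:p2_manifold}(1), $\ISM_m^2(\Lambda^*)$ is nonempty, so $d:=\dist_{L^2}(V_0,\ISM_m^2(\Lambda^*))<\infty$. Take a minimizing sequence $V_n\in\ISM_m^2(\Lambda^*)$ with $J_2(V_n)\to d^2/2$. It is bounded in $L^2(0,1)$, so a subsequence converges weakly, $V_n\rightharpoonup\hat V$. The key point is that $V\mapsto\lambda_k(V)$ is sequentially weakly continuous on $L^2(0,1)$.
\begin{itemize}
\item The eigenvalues $\lambda_{i_k}(V_n)=\lambda_k^*$ are fixed. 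Testing $a_{V_n}$ with the normalized eigenfunctions and using $\int V_n\varphi^2\ge-\|V_n\|_{L^2}\|\varphi\|_\infty^2$ together with the interpolation inequality $\|\varphi\|_\infty^2\le\varepsilon\|\varphi'\|^2+C_\varepsilon$, the eigenfunctions $\varphi_{i_k}(V_n)$ are bounded in $W_0^{1,2}$.
\item By compactness of $W_0^{1,2}\hookrightarrow C[0,1]$, the squares $\varphi^2$ converge strongly in $L^2$, so $\int V_n\varphi_n^2\to\int\hat V\varphi^2$.
\item The min-max characterization, with finite-dimensional test spaces spanned by fixed smooth functions, gives upper semicontinuity of each $\lambda_j$ under weak convergence. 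The compactness argument above applied to the first $i_k$ eigenfunctions gives the lower bound. Hence $\lambda_j(V_n)\to\lambda_j(\hat V)$.
\end{itemize}
Thus $\hat V\in\ISM_m^2(\Lambda^*)$, and weak lower semicontinuity of the norm gives $J_2(\hat V)\le d^2/2$, so $\hat V$ is a global minimizer. I expect this weak continuity of the spectral map to be the main technical step, especially justifying that the limit eigenfunctions are orthonormal with the right index.

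\textbf{Representation.} At a local minimum $\hat V$, Proposition~\ref{prop:p2_manifold}(2)--(3) says the constraint map $F_m$ is $C^1$ with surjective derivative: its components are $\langle\varphi_{i_k}^2(\hat V),\cdot\rangle$, and these squares are linearly independent. The Lagrange multiplier rule in Hilbert space then gives $\hat V-V_0=-\sum_k\alpha_k\varphi_{i_k}^2(\hat V)$. The $\alpha_k$ are unique because of the linear independence. This is \eqref{eq:p2_rep}.

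\textbf{Nonlinear system.} Substitute \eqref{eq:p2_rep} into $-\varphi''+\hat V\varphi=\lambda_k^*\varphi$ with $\varphi=\varphi_{i_k}(\hat V)$ and multiply by $\sqrt{|\alpha_k|}$. Since $\alpha_j\varphi_{i_j}^2=\sigma_j\hat u_j^2$, this gives \eqref{eq:p2_nls}, and \eqref{eq:p2_pot_u} is the same substitution. The remaining properties follow directly:
\begin{itemize}
\item Orthogonality of the $\hat u_k$ follows from orthogonality of eigenfunctions of the self-adjoint operator $\mathcal L[\hat V]$ for distinct simple eigenvalues.
\item The nodal count follows from Sturm's oscillation theorem: $\varphi_{i_k}$ has exactly $i_k-1$ simple zeros.
\item For regularity, $\hat V\in L^2$ and $\varphi\in C[0,1]$ give $\varphi''\in L^2$, so $\hat u_k\in W^{2,2}\cap W_0^{1,2}$, which embeds into $C^{1,1/2}[0,1]$.
\end{itemize}
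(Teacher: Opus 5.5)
Your proposal is correct and follows the same route as the paper, which obtains this statement as the case $p=2$ of Theorem~\ref{thm:main_lp}. That route is the direct method with weak sequential continuity of the eigenvalues, then the Ljusternik multiplier rule (applicable because $\varphi_{i_1}^2(\hat V),\dots,\varphi_{i_m}^2(\hat V)$ are linearly independent), then substitution of the representation into the eigenvalue equations, with orthogonality and the nodal count coming from self-adjointness and Sturm's theorem. The only difference is in the weak-continuity step. Lemma~\ref{lem:weak_cont_lp} gets both eigenvalue bounds at once from the uniform estimate $\sup_{\|u\|_{W_0^{1,2}}\le 1}\bigl|\int_0^1 (V_n-\hat V)u^2\,dx\bigr|\to 0$ and a two-sided form comparison under min--max. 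You instead prove upper and lower semicontinuity separately, which is the argument the paper uses in Lemma~\ref{lem:barrier_limit}; both work.
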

	
	\section{The OISP for the Schr\"odinger operator with a potential in $L^p(0,1)$, $1<p<\infty$}
	
	We now study the OISP in the Banach space $L^p(0,1)$ for arbitrary $1<p<\infty$. Unlike the case $p=2$, the space $L^p(0,1)$ is not Hilbert when $p\ne2$: its dual $(L^p(0,1))^*$ is isometrically isomorphic to $L^q(0,1)$, where $q=\frac{p}{p-1}$, and metric projection relates the primal and dual spaces through a nonlinear duality map.
	
	\subsection{Auxiliary continuity and regularity lemmas in $L^p(0,1)$}
	
	\begin{lemma}[Weak continuity of spectral data in $L^p(0,1)$]\label{lem:weak_cont_lp}
		Let $1<p<\infty$, and suppose that a sequence of potentials $\{V_j\}_{j=1}^\infty\subset L^p(0,1)$ converges weakly to $V\in L^p(0,1)$ as $j\to\infty$, that is, $V_j\rightharpoonup V$ in $L^p(0,1)$. Then, for every fixed $k\ge1$,
		\begin{equation}\label{eq:lem_weak_conv}
			\lim_{j \to \infty} \lambda_k(V_j) = \lambda_k(V), \quad \lim_{j \to \infty} \|\varphi_k(V_j) - \varphi_k(V)\|_{C^1[0,1]} = 0.
		\end{equation}
	\end{lemma}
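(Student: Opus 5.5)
The plan is to reduce everything to the behaviour of the fundamental solution of the initial value problem and then use compactness of the embedding $W^{2,1}(0,1)\hookrightarrow C^1[0,1]$. Since $V_j\rightharpoonup V$ in $L^p(0,1)$, the sequence is bounded: $\|V_j\|_{L^p}\le M$. For $\lambda\in\mathbb C$, let $y(x,\lambda;W)$ denote the solution of $-y''+Wy=\lambda y$ with $y(0)=0$, $y'(0)=1$. The Dirichlet eigenvalues of $\mathcal L[W]$ are exactly the zeros of the characteristic function $\Delta_W(\lambda):=y(1,\lambda;W)$.

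The first step is to show that $y(\cdot,\lambda;V_j)\to y(\cdot,\lambda;V)$ in $C^1[0,1]$, uniformly for $\lambda$ in compact subsets of $\mathbb C$. Write the Volterra integral equation
\[
y(x)=x+\int_0^x (x-t)\bigl(W(t)-\lambda\bigr)y(t)\,dt .
\]
The Picard iterations together with a Gronwall estimate give the bound $\|y\|_{C^1}\le C(M,\Lambda)$ for all $W$ with $\|W\|_{L^1}\le M$ and $|\lambda|\le\Lambda$. Consequently $y''=(W-\lambda)y$ is bounded in $L^p$, so $\{y(\cdot,\lambda;V_j)\}$ is bounded in $W^{2,p}$ and hence precompact in $C^1[0,1]$ by Arzel\`a--Ascoli.

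To identify every limit, I pass to the limit in the integral equation. The difference splits as
\[
\int_0^x (x-t)V_j(t)\,y_j(t)\,dt-\int_0^x (x-t)V(t)\,y(t)\,dt
=\int_0^x (x-t)V_j\,(y_j-y)\,dt+\int_0^x (x-t)(V_j-V)\,y\,dt .
\]
The first term is at most $M\|y_j-y\|_\infty\to0$ along the subsequence. The second term tends to zero because $(x-t)\chi_{[0,x]}(t)\,y(t)\in L^q$, and it does so uniformly in $x$: the family of test functions $\{(x-\cdot)_+y\}_{x\in[0,1]}$ is compact in $L^q$, and weak convergence is uniform on compact sets of test functions. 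Hence every limit satisfies the equation for $V$, so the whole sequence converges. Uniformity in $\lambda$ follows the same way, or from equicontinuity in $\lambda$ combined with analyticity.

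The second step treats the eigenvalues. The functions $\Delta_{V_j}$ are entire, locally uniformly bounded, and converge locally uniformly to $\Delta_V$, whose zeros are real and simple. By Hurwitz's theorem, near each $\lambda_k(V)$ there is exactly one zero of $\Delta_{V_j}$ for large $j$; it is real by self-adjointness. The main obstacle is indexing: I must check that this zero is the $k$-th eigenvalue of $V_j$, and not that some eigenvalue escapes to $-\infty$ or arrives from elsewhere.

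For this I would use a uniform lower bound. By the form bound \eqref{eq:quad_form} together with $\|u\|_\infty^2\le\varepsilon\|u'\|^2+C_\varepsilon\|u\|^2$, we get $\lambda_1(V_j)\ge -C(M)$. So for $R$ chosen larger than both $C(M)$ and $\lambda_k(V)$, with $\pm R$ not zeros of $\Delta_V$, the argument principle on a contour enclosing $[-R,R]$ shows that $\Delta_{V_j}$ and $\Delta_V$ have the same number of zeros there. All eigenvalues of $V_j$ lie above $-R$. Therefore the ordering is preserved, and $\lambda_k(V_j)\to\lambda_k(V)$. An alternative route to the indexing is the oscillation count, namely the number of zeros of $y(\cdot,\lambda)$ in $(0,1)$, which is stable under $C^1$ convergence because all zeros are simple.

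The third step handles the eigenfunctions. We have $\varphi_k(W)=y(\cdot,\lambda_k(W);W)/\|y(\cdot,\lambda_k(W);W)\|_{L^2}$, and this choice automatically has positive derivative at $0$. The uniform convergence in $\lambda$ from the first step, together with $\lambda_k(V_j)\to\lambda_k(V)$, gives $C^1$ convergence of the numerators. The $L^2$ norms converge to a nonzero limit, so the quotient also converges in $C^1[0,1]$, which completes \eqref{eq:lem_weak_conv}.
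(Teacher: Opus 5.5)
Your argument is correct, but it takes a different route from the paper. The paper never uses the Cauchy problem. It shows that weak convergence $V_j\rightharpoonup V$ becomes \emph{uniform} convergence of the potential part of the quadratic form on the unit ball of $W_0^{1,2}(0,1)$, namely $\varepsilon_j=\sup_{\|u\|_Q\le1}\bigl|\int_0^1(V_j-V)|u|^2\,dx\bigr|\to0$, using compactness of $W_0^{1,2}\hookrightarrow C[0,1]$. It then sandwiches a shifted coercive form, $(1-2\varepsilon_j)b_V\le b_{V_j}\le(1+2\varepsilon_j)b_V$, and min--max gives $|\lambda_k(V_j)-\lambda_k(V)|\le2\varepsilon_j(\lambda_k(V)+c)$ directly. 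Eigenfunctions are handled by a $W^{2,p}$ bound, compactness in $C^1[0,1]$, and uniqueness of the normalized eigenfunction.

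Both proofs rest on the same mechanism: weak convergence in $L^p$ is uniform on norm-compact sets of test functions. The paper applies it to the forms; you apply it to the Volterra equation for $y(\cdot,\lambda;W)$.

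The paper's route has three advantages. Min--max fixes the eigenvalue indexing automatically. It gives an explicit rate in terms of $\varepsilon_j$. It is essentially dimension-free, so it carries over to the graph and multidimensional settings that the paper's program targets.

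Your route is intrinsically one-dimensional and needs several extra inputs. You need analyticity of $\Delta_W$ in $\lambda$ and simplicity of its zeros. You assert simplicity; it follows from the standard identity $\partial_\lambda\Delta_V(\lambda_k)\,y'(1,\lambda_k;V)=\int_0^1y^2\,dx$. You also need Hurwitz's theorem with the argument principle, plus a separate uniform lower bound $\lambda_1(V_j)\ge-C(M)$ to pin down the indexing. In exchange, you get locally uniform convergence of the whole characteristic function, and hence of all eigenvalues in any bounded window at once. You also get eigenfunction convergence straight from the explicit formula $\varphi_k=y/\|y\|_{L^2}$, with no subsequence-and-uniqueness step.
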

	
	\begin{proof}
		1. Endow the form domain $Q=W_0^{1,2}(0,1)$ with the standard norm $\|u\|_Q=\left(\int_0^1(|u'|^2+|u|^2)\,dx\right)^{1/2}$. By the one-dimensional Sobolev embedding theorem and the Arzel\`a--Ascoli theorem, the embedding $Q\hookrightarrow C[0,1]$ is compact, and $\|u\|_{C[0,1]}\le a\|u\|_Q$ for all $u\in Q$.
		
		2. Set $w_j=V_j-V$. Since $w_j\rightharpoonup0$ weakly in $L^p(0,1)$, the sequence $\|w_j\|_{L^p}$ is bounded. We claim that
		\begin{equation}\label{eq:eps_j_zero}
			\varepsilon_j := \sup_{\|u\|_Q \le 1} \left| \int_0^1 (V_j(x) - V(x)) |u(x)|^2 \, dx \right| \longrightarrow 0 \quad (j \to \infty).
		\end{equation}
		Assume the contrary. Then there are $\delta>0$, a subsequence (still indexed by $j$), and $u_j\in Q$ with $\|u_j\|_Q\le1$ such that $\left|\int_0^1w_j|u_j|^2dx\right|\ge\delta$. By compactness of $Q\hookrightarrow C[0,1]$, after passing to a subsequence we may assume that $u_j\to u_*$ strongly in $C[0,1]$. Hence $|u_j|^2\to|u_*|^2$ uniformly on $[0,1]$ and therefore strongly in $L^q(0,1)$, where $q=\frac{p}{p-1}$. H\"older's inequality gives
		\[
		\left| \int_0^1 w_j |u_j|^2 \, dx \right| \le \|w_j\|_{L^p} \left\| |u_j|^2 - |u_*|^2 \right\|_{L^q} + \left| \int_0^1 w_j(x) |u_*(x)|^2 \, dx \right|.
		\]
		The first term on the right tends to zero because $\|w_j\|_{L^p}$ is bounded and $|u_j|^2\to|u_*|^2$ strongly in $L^q(0,1)$. The second tends to zero by the definition of weak convergence $w_j\rightharpoonup0$ in $L^p(0,1)$, since $|u_*|^2\in C[0,1]\subset L^q(0,1)$. This contradiction proves \eqref{eq:eps_j_zero}.
		
		3. Using the Gagliardo--Nirenberg inequality $\|u\|_{C[0,1]}^2\le2\|u\|_{L^2}\|u'\|_{L^2}\le\eta\|u'\|_{L^2}^2+\eta^{-1}\|u\|_{L^2}^2$ and H\"older's inequality $\left|\int_0^1V|u|^2dx\right|\le\|V\|_{L^p}\|u\|_{C[0,1]}^2$, choose a shift $c>0$, independent of $j$, so that the shifted form $b_V[u]:=a_V[u,u]+c\|u\|_{L^2}^2$ satisfies the coercivity estimate $b_V[u]\ge\frac12\|u\|_Q^2$ for all $u\in Q$. Then \eqref{eq:eps_j_zero} implies, for every $u\in Q$,
		\begin{equation}\label{eq:form_sandwich}
			(1 - 2\varepsilon_j) b_V[u] \le b_{V_j}[u] \le (1 + 2\varepsilon_j) b_V[u].
		\end{equation}
		Applying the Courant--Fischer min--max principle to \eqref{eq:form_sandwich},
		\[
		\lambda_k(V) + c = \inf_{\substack{E \subset Q \\ \dim E = k}} \sup_{u \in E \setminus \{0\}} \frac{b_V[u]}{\|u\|_{L^2}^2},
		\]
		we obtain, for all sufficiently large $j$, when $2\varepsilon_j<1$, the two-sided estimate
		\[
		(1 - 2\varepsilon_j)\big(\lambda_k(V) + c\big) \le \lambda_k(V_j) + c \le (1 + 2\varepsilon_j)\big(\lambda_k(V) + c\big).
		\]
		Subtracting $\lambda_k(V)+c$ yields
		\[
		|\lambda_k(V_j) - \lambda_k(V)| \le 2\varepsilon_j \big(\lambda_k(V) + c\big) \longrightarrow 0 \quad (j \to \infty).
		\]
		
		4. Uniform boundedness of $\|V_j\|_p$ and of the eigenvalues, together with the form estimate in Step~3, implies uniform boundedness of $\|\varphi_k(V_j)\|_{H_0^1}$ and hence of $\|\varphi_k(V_j)\|_C$. The equations $-\varphi_k''(V_j)=(\lambda_k(V_j)-V_j)\varphi_k(V_j)$ and the bound $\|\varphi_k(V_j)\|_{C[0,1]}\le C_k$ imply $\|\varphi_k''(V_j)\|_{L^p(0,1)}\le\widetilde C_k$. Since $W^{2,p}(0,1)\hookrightarrow C^1[0,1]$ compactly for $p>1$, a subsequence of $\{\varphi_k(V_j)\}$ converges strongly in $C^1[0,1]$ to a solution $\varphi_k^*$ of the limiting problem $\mathcal L[V]\varphi_k^*=\lambda_k(V)\varphi_k^*$, with $\|\varphi_k^*\|_{L^2}=1$ and $\varphi_k^{*\prime}(0)>0$. The eigenfunction with this normalization is unique; therefore the whole sequence $\varphi_k(V_j)$ converges to $\varphi_k(V)$ in $C^1[0,1]$.
	\end{proof}
	
	\begin{lemma}[Fr\'echet differentiability and linear independence of squared eigenfunctions in $L^p(0,1)$]\label{lem:frechet_indep_lp}
		Let $1<p<\infty$ and $q=\frac{p}{p-1}$. Then:
		\begin{enumerate}
			\item For every $k\ge1$, the functional $\lambda_k:L^p(0,1)\to\mathbb R$ is real analytic (in particular, it belongs to $C^\infty(L^p(0,1))$), and its Fr\'echet derivative at $V\in L^p(0,1)$ is represented by $\varphi_k^2(V)\in C[0,1]\subset L^q(0,1)$ as follows:
			\begin{equation}\label{eq:frechet_lp}
				\langle D\lambda_k(V), h \rangle = \int_0^1 \varphi_k^2(V)(x) h(x) \, dx, \quad \forall h \in L^p(0,1).
			\end{equation}
			\item For every potential $V\in L^p(0,1)$ and every finite set of indices $1\le i_1<i_2<\dots<i_m$, the functions $\{\varphi_{i_1}^2(V),\dots,\varphi_{i_m}^2(V)\}$ are linearly independent on $(0,1)$.
		\end{enumerate}
	\end{lemma}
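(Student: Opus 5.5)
For the first part I would use analytic perturbation theory in the Kato sense. Extend $V\mapsto\mathcal L[V]$ to complex potentials $V\in L^p(0,1;\mathbb C)$. The form estimate from Step~3 of Lemma~\ref{lem:weak_cont_lp} shows that $\int_0^1 V|u|^2\,dx$ is form-bounded relative to $\int_0^1|u'|^2\,dx$ with relative bound zero. Indeed,
\[
\left|\int_0^1 V|u|^2\right|\le \|V\|_{L^p}\|u\|_{C}^2\le \|V\|_{L^p}\bigl(\eta\|u'\|_{L^2}^2+\eta^{-1}\|u\|_{L^2}^2\bigr).
\]
Hence, for a fixed $V$ and a direction $h$, $z\mapsto \mathcal L[V+zh]$ is a holomorphic family of type (B) in $z\in\mathbb C$. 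More precisely, the map $W\mapsto (\mathcal L[W]+c)^{-1}$ is holomorphic, as a map from a neighbourhood of $V$ in the complex space $L^p$ into bounded operators on $L^2$. This follows from a Neumann series: write
\[
(\mathcal L[V+h]+c)^{-1}=(\mathcal L[V]+c)^{-1/2}\bigl(I+K_h\bigr)^{-1}(\mathcal L[V]+c)^{-1/2},
\]
where $K_h=(\mathcal L[V]+c)^{-1/2}h(\mathcal L[V]+c)^{-1/2}$ is bounded and linear in $h$, with $\|K_h\|\le C\|h\|_{L^p}$ by the estimate above.

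The eigenvalue $\lambda_k(V)$ is simple. Choose a small circle $\Gamma$ around it separating it from the rest of the spectrum. By Lemma~\ref{lem:weak_cont_lp}, or directly by continuity of the resolvent, no spectrum crosses $\Gamma$ for $W$ near $V$. Then the Riesz projection
\[
P(W)=-\frac1{2\pi i}\oint_\Gamma(\mathcal L[W]-\zeta)^{-1}d\zeta
\]
is a rank-one projection depending holomorphically on $W$. The eigenvalue is
\[
\lambda_k(W)=\frac{\operatorname{tr}\bigl(\mathcal L[W]P(W)\bigr)}{\operatorname{tr}P(W)},
\]
or, equivalently, $\lambda_k(W)=\operatorname{tr}\bigl(-\frac1{2\pi i}\oint_\Gamma \zeta(\mathcal L[W]-\zeta)^{-1}d\zeta\bigr)$, which is holomorphic. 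Restricting to real $W$ gives real analyticity of $\lambda_k$ on $L^p(0,1)$. The derivative formula then follows from first-order perturbation theory (Hellmann--Feynman). Since $\lambda_k(V+th)=a_{V+th}[\varphi_k(V+th)]$ with normalized eigenfunctions depending differentiably on $t$, and $\varphi_k$ is stationary for the Rayleigh quotient, we get $\frac{d}{dt}\lambda_k(V+th)|_{t=0}=\int_0^1 h\varphi_k^2\,dx$. Here $\varphi_k^2\in C[0,1]\subset L^q$, so this is a bounded functional on $L^p$. The main technical point of part 1 is the holomorphy of the resolvent in the $L^p$ topology, which is exactly the relative form-boundedness above.

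For the second part I would argue by contradiction using the Wronskian-type identity for products of solutions. Suppose $\sum_{k=1}^m c_k\varphi_{i_k}^2=0$ on $(0,1)$ with not all $c_k=0$; set $\mu_k=\lambda_{i_k}(V)$ and $y_k=\varphi_{i_k}^2$. Each product $y=\psi^2$ of a solution of $-\psi''+V\psi=\mu\psi$ satisfies the third-order equation
\[
y'''-4(V-\mu)y'-2V'y=0,
\]
valid only distributionally when $V\in L^p$. To avoid $V'$, I would use the integrated form
\[
\tfrac12 y y''-\tfrac14(y')^2-(V-\mu)y^2=0,
\]
which is the identity $\psi^2\bigl(\psi''-(V-\mu)\psi\bigr)=0$ rewritten. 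A cleaner route is to differentiate: $\frac{d}{dx}(\varphi_i\varphi_j'-\varphi_i'\varphi_j)=(\mu_i-\mu_j)\varphi_i\varphi_j$. Instead, I would apply the operator $\mathcal T_\mu y:=y''' -4(V-\mu)y'-2V'y$, written in the weak form
\[
(y''-4(V-\mu)y)'+2Vy'=0 \quad\text{tested against } C_c^\infty .
\]
Since $\mathcal T_\mu y=\mathcal T_0 y+4\mu y'$, applying $\mathcal T_0$ to the relation gives $\sum c_k\mu_k y_k'=0$. Iterating yields $\sum_k c_k\mu_k^r y_k'=0$ for all $r$. Because the $\mu_k$ are distinct, a Vandermonde argument gives $c_k y_k'=0$ for each $k$. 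Thus $c_k\varphi_{i_k}^2$ is constant, and it must vanish by the boundary conditions together with $\varphi_{i_k}\not\equiv0$. Hence $c_k=0$ for all $k$, a contradiction.

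The main obstacle is justifying the application of $\mathcal T_0$ when $V$ is only in $L^p$. I would handle it by working with $y\in W^{2,p}$ and interpreting the relations in $W^{-1,p}$. The term $2V'y+4Vy'=(4Vy)'-2Vy'$ is well defined distributionally because $Vy, Vy'\in L^p$, and linearity of the distributional identities lets the iteration go through unchanged.
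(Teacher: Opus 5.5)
Your Part~1 is correct and follows the paper's argument: $h$ is form-bounded relative to $-d^2/dx^2$ with relative bound zero, and Kato's analytic perturbation theory for a simple eigenvalue then gives analyticity and the first-variation formula. You simply write it out in more detail, via the resolvent factorization and the Riesz projection.

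Part~2 has a genuine gap at the step ``Iterating yields $\sum_k c_k\mu_k^r y_k'=0$ for all $r$''.

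\begin{itemize}
\item \textbf{What works.} Your identity $\mathcal T_0 y_k=-4\mu_k y_k'$ is correct. Applying $\mathcal T_0$ to $\sum_k c_k y_k\equiv0$ does give $\sum_k c_k\mu_k y_k'\equiv0$.
\item \textbf{Where it breaks.} To reach the next level you must apply $\mathcal T_0$ to $\sum_k c_k\mu_k y_k$, but at that point you only know this function is a constant $C_1$. If $C_1\neq0$, then $\mathcal T_0 C_1=-2C_1V'$. You therefore get $\sum_k c_k\mu_k^2y_k'=\tfrac12 C_1V'$, not $0$, and the induction stops.
\item \textbf{Why it is not a technicality.} This is exactly the mechanism behind Theorem~\ref{thm:analytic_pot_dep}. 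The Lam\'e triple of Example~\ref{ex:lame_graph}, restricted to any subinterval, satisfies $\sum_k c_ky_k\equiv0$ and $\sum_k c_k\mu_ky_k\equiv-(1-\kappa^2)\neq0$, while $\sum_k c_k\mu_k^2y_k'\not\equiv0$. Your iteration never uses the boundary conditions; you invoke them only at the very end, to kill the constant $c_ky_k$. As written, it would therefore equally ``prove'' independence in that example, where it is false.
\item \textbf{The fix.} This is the paper's ingredient. After each step, integrate from $0$ and use $y_k(0)=\varphi_{i_k}(0)^2=0$. This gives $\sum_k c_k\mu_k^r y_k\equiv0$ itself, not merely its derivative, before you apply $\mathcal T_0$ again. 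You can then apply the Vandermonde argument directly to $\sum_k \mu_k^r (c_k y_k)\equiv 0$, $r=0,\dots,m-1$.
\end{itemize}

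Your $\mathcal T_0$ is just the third-order form of the paper's identities \eqref{eq:star_identities}, which combine to $S_r'''-4VS_r'-2V'S_r=-4S_{r+1}'$. The paper keeps the auxiliary $T_r$ precisely so that $V'$ never appears. That removes the distributional difficulty you identified as your main obstacle; the real crux is the vanishing of the integration constants.

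There are also two algebraic slips in your weak forms. The correct expressions are $\mathcal T_\mu y=\bigl(y''-2(V-2\mu)y\bigr)'-2Vy'$ and $2V'y+4Vy'=2(Vy)'+2Vy'$. Your $(y''-4(V-\mu)y)'+2Vy'$ and $(4Vy)'-2Vy'$ are not equal to these.
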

	Consequently, for every $\Lambda^*\in\Omega_m$, the isospectral set $\ISM_m^p(\Lambda^*)$ is a smooth real-analytic submanifold of codimension $m$ in the Banach space $L^p(0,1)$.
	\begin{proof}	
		1. For $u\in W_0^{1,2}(0,1)$,
		$\left|\int_0^1h|u|^2dx\right|\le\|h\|_{L^p}\|u\|_{C[0,1]}^2\le\|h\|_{L^p}(\eta\|u'\|_{L^2}^2+\eta^{-1}\|u\|_{L^2}^2)$.
		Thus, for every perturbation $h\in L^p(0,1)$, multiplication by $h$ is a relatively form-bounded perturbation of $\mathcal L[V]$. All eigenvalues $\lambda_k(V)$ of the Dirichlet problem on $(0,1)$ are simple; hence Kato's analytic perturbation theory \cite{Kato1995} yields real analyticity of the map $V\mapsto\lambda_k(V)$ from $L^p(0,1)$ to $\mathbb R$ and the first-variation formula \eqref{eq:frechet_lp}.
		
		2. We prove the linear independence of $\varphi_{i_1}^2(V),\dots,\varphi_{i_m}^2(V)$ for $V\in L^p(0,1)$ by developing the method of \cite{IlyasovValeev2021,Valeev2026Graph}. Suppose that, for some real constants $c_1,\dots,c_m\in\mathbb R$,
		\begin{equation}\label{eq:lin_dep_assump}
			\sum_{k=1}^m c_k \varphi_{i_k}^2(x) = 0, \quad x \in [0,1].
		\end{equation}
		Since $\varphi_{i_k}\in W^{2,p}(0,1)\hookrightarrow C^1[0,1]$, differentiate \eqref{eq:lin_dep_assump} twice on $(0,1)$:
		\begin{equation}\label{eq:lin_dep_diff12}
			\sum_{k=1}^m c_k \varphi_{i_k}(x)\varphi_{i_k}'(x) = 0, \quad \sum_{k=1}^m c_k \left( (\varphi_{i_k}'(x))^2 + \varphi_{i_k}(x)\varphi_{i_k}''(x) \right) = 0.
		\end{equation}
		Substituting $\varphi_{i_k}''=(V(x)-\mu_k)\varphi_{i_k}$, where $\mu_k:=\lambda_{i_k}(V)$, into the second equality in \eqref{eq:lin_dep_diff12} and using \eqref{eq:lin_dep_assump}, we obtain
		\begin{equation}\label{eq:lin_dep_prime_sq}
			\sum_{k=1}^m c_k (\varphi_{i_k}'(x))^2 = \sum_{k=1}^m c_k \mu_k \varphi_{i_k}^2(x), \quad x \in [0,1].
		\end{equation}
		Since $\varphi_{i_k}\in W^{2,p}(0,1)$, the function $(\varphi_{i_k}')^2$ belongs to $W^{1,p}(0,1)$ and its derivative is $((\varphi_{i_k}')^2)'=2\varphi_{i_k}'\varphi_{i_k}''=2(V(x)-\mu_k)\varphi_{i_k}\varphi_{i_k}'$ almost everywhere. Differentiating \eqref{eq:lin_dep_prime_sq} and using the first identity in \eqref{eq:lin_dep_diff12}, we find
		\[
		-2\sum_{k=1}^m c_k \mu_k \varphi_{i_k}(x)\varphi_{i_k}'(x) = 2\sum_{k=1}^m c_k \mu_k \varphi_{i_k}(x)\varphi_{i_k}'(x) \implies \sum_{k=1}^m c_k \mu_k \varphi_{i_k}(x)\varphi_{i_k}'(x) = 0.
		\]
		Integrating from $0$ to $x$ and using the Dirichlet condition $\varphi_{i_k}(0)=0$, we obtain $\sum_{k=1}^m c_k\mu_k\varphi_{i_k}^2(x)=0$ on $[0,1]$. By \eqref{eq:lin_dep_prime_sq}, this also gives $\sum_{k=1}^m c_k(\varphi_{i_k}'(x))^2=0$ on $[0,1]$. Iterating this argument with respect to the power $s=0,1,\dots,m-1$ yields, at $x=0$, the system
		\begin{equation}\label{eq:vandermonde_sys}
			\sum_{k=1}^m \mu_k^s \left( c_k (\varphi_{i_k}'(0))^2 \right) = 0, \quad s = 0, 1, \dots, m-1.
		\end{equation}
		Since the eigenvalues $\mu_1<\mu_2<\dots<\mu_m$ are pairwise distinct, the Vandermonde determinant in \eqref{eq:vandermonde_sys} is nonzero. Hence $c_k(\varphi_{i_k}'(0))^2=0$ for all $k=1,\dots,m$. By uniqueness for the Cauchy problem, a nontrivial eigenfunction satisfying $\varphi_{i_k}(0)=0$ must have $\varphi_{i_k}'(0)\ne0$. Therefore $c_1=c_2=\dots=c_m=0$.
		
		Linear independence of the functionals $D\lambda_{i_k}(V)=\varphi_{i_k}^2(V)\in L^q(0,1)$ means that $DF_m(V):L^p(0,1)\to\mathbb R^m$ is surjective and has a complemented kernel of codimension $m$. The implicit function theorem in Banach spaces therefore proves the asserted smooth manifold structure of $\ISM_m^p(\Lambda^*)$.
	\end{proof}
	
	\subsection{Linear dependence and independence of squared eigenfunctions and analyticity of the potential}
	
	Linear independence or dependence of squared eigenfunctions of differential operators plays an important role in spectral theory, inverse problems, the calculus of variations, and optimal control. By the first-variation formula $\langle D\lambda_k(V),h\rangle=\int\varphi_k^2(V)h\,dx$, the squares of normalized eigenfunctions are gradients (Fr\'echet derivatives) of the eigenvalues with respect to the coefficients of the operator \cite{Borg1946,Poschel1987}. Hence linear independence of a finite or countable family of squared eigenfunctions determines the local geometry of spectral maps, guarantees regularity of isospectral sets as smooth submanifolds---and thereby provides a description of their tangent and normal bundles in the classical inverse spectral problem \cite{Poschel1987}---and is the necessary and sufficient constraint qualification, i.e. surjectivity of $DF_m(V)$, for deriving Euler--Lagrange equations in optimization inverse spectral problems \cite{IlyasovValeev2021,Valeev2026Graph}.
	
	In the theory of extremal spectral problems for Sturm--Liouville operators, linear dependence of squared eigenfunctions was singled out and studied by T.~J.~Mahar and B.~E.~Willner \cite{MaharWillner1976,MaharWillner1980}. They showed that such dependence, which may occur for nonseparated and periodic boundary conditions, leads to degeneracy of the optimality conditions and to abnormal extremals. In multidimensional problems and control theory for partial differential equations, linear independence of squared eigenfunctions, and in particular the basis properties generated by them, underlies optimal placement of sensors and actuators and bilinear control of quantum systems. For example, Y.~Privat and M.~Sigalotti \cite{PrivatSigalotti2010} proved that linear independence of the squares of Dirichlet--Laplacian eigenfunctions is generic under variations of the domain geometry. An analogous genericity result under variations of the potential on metric graphs with cycles was obtained in \cite{Valeev2026Graph}.
	
	In the proof of part~2 of Lemma~\ref{lem:frechet_indep_lp}, the key step was repeated differentiation of a linear combination of squared eigenfunctions and cancellation of the resulting constants of integration at the endpoint $x=0$ by the Dirichlet condition. We now show that this procedure is a special case of a general differential-algebraic scheme. Within a unified framework, this scheme proves linear independence of squared eigenfunctions for general separated boundary conditions and proves analyticity of the potential $V(x)$ when the squared eigenfunctions are linearly dependent, as may happen, for example, under periodic boundary conditions \cite{MaharWillner1980,Valeev2026Graph}.
	
	Let $V\in L^p(0,1)$, $1<p<\infty$, and let $u_1,\dots,u_m\in W^{2,p}(0,1)$, $u_k\not\equiv0$, be real solutions of the Sturm--Liouville equations
	\begin{equation}\label{eq:sl_gen_uk}
		-u_k''(x) + V(x)u_k(x) = \lambda_k u_k(x) \quad \text{a.e. on } (0,1), \qquad k = 1, \dots, m,
	\end{equation}
	corresponding to pairwise distinct real spectral parameters $\lambda_1,\dots,\lambda_m\in\mathbb R$, with $\lambda_i\ne\lambda_j$ for $i\ne j$. No boundary conditions are imposed on the functions $u_k$ a priori.
	
	For arbitrary real coefficients $c_1,\dots,c_m\in\mathbb R$ and every integer $r\ge0$, introduce two auxiliary functions:
	\begin{equation}\label{eq:Sr_Tr_def}
		S_r(x) := \sum_{k=1}^m c_k \lambda_k^r u_k^2(x), \qquad T_r(x) := \sum_{k=1}^m c_k \lambda_k^r \big(u_k'(x)\big)^2, \qquad x \in [0,1].
	\end{equation}
	
	\begin{lemma}[Fundamental recurrence identities]\label{lem:Sr_Tr_identities}
		Let $V\in L^p(0,1)$, $1<p<\infty$, and let $u_k\in W^{2,p}(0,1)$ satisfy \eqref{eq:sl_gen_uk}. Then, for every $r\ge0$, the functions $S_r\in W^{2,p}(0,1)\hookrightarrow C^1[0,1]$ and $T_r\in W^{1,p}(0,1)\hookrightarrow C[0,1]$ satisfy, in the sense of weak derivatives and almost everywhere on $(0,1)$, the recurrence identities
		\begin{equation}\label{eq:star_identities}
			S_r'' = 2T_r + 2V S_r - 2S_{r+1}, \qquad T_r' = V S_r' - S_{r+1}'. \tag{$*$}
		\end{equation}
		\stepcounter{equation}
		In particular, if $S_r(x)\equiv0$ on $[0,1]$ for some $r\ge0$, then $S_{r+1}(x)$ and $T_r(x)$ are identically equal to the same constant on $[0,1]$:
		\begin{equation}\label{eq:Sr_plus_1_const}
			S_{r+1}(x) = T_r(x) \equiv C_r = \mathrm{const}, \qquad x \in [0,1].
		\end{equation}
	\end{lemma}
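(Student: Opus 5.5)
The plan is to verify the identities termwise, exploiting linearity of $S_r$ and $T_r$ in the summands, and then read off the ``in particular'' statement from \eqref{eq:star_identities}.

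\emph{Regularity.} Since $u_k\in W^{2,p}(0,1)\hookrightarrow C^1[0,1]$ and $W^{1,p}(0,1)$ is an algebra in one dimension, $u_k^2$ has first derivative $2u_ku_k'\in W^{1,p}$. Hence $u_k^2\in W^{2,p}$. In the same way, $(u_k')^2\in W^{1,p}$ with $((u_k')^2)'=2u_k'u_k''$ almost everywhere. This is the product rule for products of a $W^{1,p}$ function and a continuous $W^{1,p}$ function, applied exactly as in the proof of Lemma~\ref{lem:frechet_indep_lp}.

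\emph{Termwise identities.} For a fixed $k$, substitute $u_k''=(V-\lambda_k)u_k$ from \eqref{eq:sl_gen_uk}. This gives
\[
(u_k^2)''=2(u_k')^2+2u_ku_k''=2(u_k')^2+2Vu_k^2-2\lambda_ku_k^2,
\]
\[
((u_k')^2)'=2u_k'u_k''=(V-\lambda_k)\,2u_ku_k'=V\,(u_k^2)'-\lambda_k(u_k^2)'.
\]
Multiplying both identities by $c_k\lambda_k^r$ and summing over $k$ yields \eqref{eq:star_identities} directly. The factor $\lambda_k\cdot\lambda_k^r=\lambda_k^{r+1}$ produces exactly $S_{r+1}$ and $S_{r+1}'$. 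All identities hold almost everywhere, since $V\in L^p$ is only defined a.e.

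\emph{The consequence.} Suppose $S_r\equiv0$ on $[0,1]$. Then $S_r'\equiv0$ and $S_r''=0$ a.e. The second identity in \eqref{eq:star_identities} then reads $T_r'=-S_{r+1}'$, so $(T_r+S_{r+1})'=0$ a.e. Since $T_r+S_{r+1}\in W^{1,p}$, it is an absolutely continuous function with zero derivative, and is therefore a constant $K$.

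The first identity gives $0=2T_r-2S_{r+1}$ almost everywhere. Both functions are continuous, so $T_r=S_{r+1}$ everywhere on $[0,1]$. Combining the two facts, $2T_r=K$, and so $T_r=S_{r+1}\equiv C_r:=K/2$.

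The only delicate point is the weak-derivative bookkeeping. Its key ingredients are that the chain and product rules hold in $W^{1,p}$ on an interval, and the passage from ``a.e.'' to ``everywhere'' via continuity. I expect no genuine obstacle beyond this.
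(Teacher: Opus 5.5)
Your proof is correct and follows the paper's argument. You differentiate $u_k^2$ and $(u_k')^2$ termwise, substitute $u_k''=(V-\lambda_k)u_k$, and then read off the consequence from \eqref{eq:star_identities} using continuity. The only cosmetic difference is in the last step: you integrate $(T_r+S_{r+1})'=0$ directly, whereas the paper differentiates the identity $T_r=S_{r+1}$ and compares it with $T_r'=-S_{r+1}'$ to obtain $S_{r+1}'=0$; the two routes are equivalent.
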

	
	\begin{proof}
		Since $u_k\in W^{2,p}(0,1)\hookrightarrow C^1[0,1]$, we have $u_k^2\in W^{2,p}(0,1)$ and $(u_k')^2\in W^{1,p}(0,1)$. Differentiating $S_r$ and $T_r$ and substituting $u_k''=(V-\lambda_k)u_k$ from \eqref{eq:sl_gen_uk}, we obtain almost everywhere on $(0,1)$
		\[
		S_r' = 2\sum_{k=1}^m c_k \lambda_k^r u_k u_k',\]
		\[S_r'' = 2\sum_{k=1}^m c_k \lambda_k^r \big((u_k')^2 + u_k u_k''\big) = 2T_r + 2\sum_{k=1}^m c_k \lambda_k^r (V - \lambda_k)u_k^2 = 2T_r + 2V S_r - 2S_{r+1},
		\]
		\[
		T_r' = 2\sum_{k=1}^m c_k \lambda_k^r u_k' u_k'' = 2\sum_{k=1}^m c_k \lambda_k^r (V - \lambda_k) u_k u_k' = V S_r' - S_{r+1}',
		\]
		which proves \eqref{eq:star_identities}.
		
		Now suppose that $S_r(x)\equiv0$ on $[0,1]$. Then $S_r'\equiv0$ and $S_r''\equiv0$, and the first identity in \eqref{eq:star_identities} gives $T_r(x)=S_{r+1}(x)$ on $[0,1]$ by continuity. Weak differentiation of $T_r=S_{r+1}$ gives $T_r'=S_{r+1}'$, whereas the second identity in \eqref{eq:star_identities}, with $S_r'=0$, gives $T_r'=-S_{r+1}'$. Thus $2S_{r+1}'=0$ almost everywhere on $(0,1)$. Since $S_{r+1}\in W^{1,p}(0,1)$ is absolutely continuous, it is constant on $[0,1]$, and \eqref{eq:Sr_plus_1_const} follows.
	\end{proof}
	
	Lemma~\ref{lem:Sr_Tr_identities} shows that if the family $\{u_k^2\}_{k=1}^m$ satisfies $\sum_{k=1}^m c_ku_k^2(x)=0$, that is, $S_0(x)\equiv0$, then the subsequent structure is determined entirely by the constants $C_r$:
	\begin{itemize}
		\item \textbf{Case 1 (separated boundary conditions at one endpoint).} Suppose that, at one endpoint $x_*\in\{0,1\}$, all $u_k$ satisfy either the same Dirichlet condition $u_k(x_*)=0$ or the same Robin condition $u_k'(x_*)=\beta u_k(x_*)$, $\beta\in\mathbb R$. If $S_r\equiv0$, then in the Dirichlet case $C_r=S_{r+1}(x_*)=0$, while in the Robin case $C_r=T_r(x_*)=\beta^2S_r(x_*)=0$. Induction in $r$ gives $S_0(x)=S_1(x)=\dots=S_{m-1}(x)=0$ on $[0,1]$. Since the Vandermonde determinant $\det(\lambda_k^r)_{r,k=0}^{m-1}=\prod_{1\le i<j\le m}(\lambda_j-\lambda_i)$ is nonzero, it follows that $c_ku_k^2(x)\equiv0$. As $u_k\not\equiv0$, all $c_k$ vanish. This proves linear independence of the squared eigenfunctions (part~2 of Lemma~\ref{lem:frechet_indep_lp} and Theorem~4.1 in \cite{Valeev2026Graph}).
		\item \textbf{Case 2 (linear dependence of squared eigenfunctions without endpoint cancellation).} For periodic boundary conditions, or on edges of graphs containing cycles, the endpoint mechanism that forces the constants $C_r$ to vanish is absent. The first nonzero constant makes it possible to express the potential in terms of squared eigenfunctions and to prove its real analyticity.
	\end{itemize}
	
	\begin{theorem}[Analyticity of the potential under linear dependence of squared eigenfunctions]\label{thm:analytic_pot_dep}
		Let $V\in L^p(0,1)$, $1<p<\infty$, and let the nontrivial real solutions $u_1,\dots,u_m\in W^{2,p}(0,1)$, $u_k\not\equiv0$, of \eqref{eq:sl_gen_uk}, corresponding to pairwise distinct eigenvalues $\lambda_1,\dots,\lambda_m\in\mathbb R$, have \textbf{linearly dependent} squares $u_1^2,\dots,u_m^2$ on $(0,1)$:
		\begin{equation}\label{eq:uk_sq_dep}
			\sum_{k=1}^m c_k u_k^2(x) = 0, \qquad x \in [0,1],
		\end{equation}
		where the coefficients $c_1,\dots,c_m\in\mathbb R$ are not all zero. Then:
		\begin{enumerate}
			\item There exist an integer $r_0\in\{1,\dots,m-1\}$ and real constants $C\ne0$, $A\in\mathbb R$, and $b_1,\dots,b_m\in\mathbb R$ such that, almost everywhere on $(0,1)$, the potential $V(x)$ is a linear combination of the squared eigenfunctions and a constant:
			\begin{equation}\label{eq:V_via_Sr_boxed}
				V(x) = A + \frac{2}{C} S_{r_0+1}(x) = A + \sum_{k=1}^m b_k u_k^2(x) \quad \text{a.e. on } (0,1).
			\end{equation}
			\item There are $\varepsilon>0$ and a real-analytic function $\widetilde V\in C^\omega(-\varepsilon,1+\varepsilon)$ that agrees with $V(x)$ almost everywhere on $(0,1)$.
			\item If $V$ and the eigenfunctions $u_k$ satisfy periodic boundary conditions on the circle $\mathbb R/\mathbb Z$ of length $1$, then $\widetilde V$ is a real-analytic periodic function on the entire circle $\mathbb R/\mathbb Z$.
		\end{enumerate}
	\end{theorem}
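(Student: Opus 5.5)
The plan is to iterate the recurrence identities of Lemma~\ref{lem:Sr_Tr_identities}, starting from $S_0\equiv0$ (which is exactly \eqref{eq:uk_sq_dep}), until the first nonzero integration constant appears. That constant allows one to solve the identity $(*)$ for $V$. Analyticity then follows from the resulting closed polynomial ODE system.

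\emph{Step 1: locating $r_0$.} Apply \eqref{eq:Sr_plus_1_const} inductively. If $S_r\equiv0$, then $S_{r+1}\equiv C_r$ is constant, and if $C_r=0$ the induction continues.

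It cannot happen that $S_0\equiv S_1\equiv\dots\equiv S_{m-1}\equiv0$. Indeed, the nonvanishing Vandermonde determinant $\det(\lambda_k^r)$ would then give $c_ku_k^2\equiv0$ for every $k$. Since $u_k\not\equiv0$, this would force all $c_k=0$, contradicting the hypothesis.

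Hence there is a first index $r\le m-2$ with $S_r\equiv0$ and $S_{r+1}\equiv C\ne0$. Set $r_0:=r+1\in\{1,\dots,m-1\}$.

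\emph{Step 2: formula for $V$.} Apply $(*)$ at index $r_0$, where $S_{r_0}\equiv C$ and so $S_{r_0}'=S_{r_0}''=0$.

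The second identity gives $T_{r_0}'=-S_{r_0+1}'$ a.e. Since both functions are absolutely continuous, $T_{r_0}=K-S_{r_0+1}$ for some constant $K$.

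Substituting this into the first identity gives
\[
0=2(K-S_{r_0+1})+2CV-2S_{r_0+1}.
\]
Solving for $V$,
\[
V=\frac{2}{C}S_{r_0+1}-\frac{K}{C}\quad\text{a.e.}
\]
This is \eqref{eq:V_via_Sr_boxed} with $A=-K/C$ and $b_k=2c_k\lambda_k^{r_0+1}/C$.

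\emph{Step 3: analyticity.} By Step 2, $V$ agrees a.e. with the continuous function $A+\sum_k b_ku_k^2$. Bootstrapping, each $u_k$ is then $C^2$ and solves classically the autonomous system
\[
u_k''=\Bigl(A+\sum_j b_ju_j^2-\lambda_k\Bigr)u_k,\qquad k=1,\dots,m.
\]
This is a first-order system in $(u,u')\in\mathbb R^{2m}$ with a polynomial, hence entire, vector field.

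By the Cauchy (analytic) existence theorem, the solution with data $(u_k(x_0),u_k'(x_0))$ at some $x_0\in[0,1]$ is real analytic on its maximal interval of existence. By Picard uniqueness, which applies because the field is locally Lipschitz, this solution coincides with $(u_k)$ on $[0,1]$.

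The maximal interval is open and contains $[0,1]$, because the solution stays bounded there. Therefore it contains $(-\varepsilon,1+\varepsilon)$ for some $\varepsilon>0$. Define
\[
\widetilde V:=A+\sum_k b_ku_k^2
\]
on this interval; it is real analytic and equals $V$ a.e. on $(0,1)$.

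\emph{Step 4: periodic case.} Under periodic conditions, $(u_k(0),u_k'(0))=(u_k(1),u_k'(1))$. Since the system is autonomous, uniqueness shows that the analytic solution is $1$-periodic and exists for all $x\in\mathbb R$. Hence $\widetilde V$ is a real-analytic function on $\mathbb R/\mathbb Z$.

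The step I expect to require the most care is Step 3. One must justify that the $W^{2,p}$ solutions of the original equations, which involve the merely $L^p$ potential $V$, are genuinely the classical solutions of the polynomial system, so that uniqueness applies. One must also check that continuation past the endpoints is legitimate. The a.e. identity from Step 2 together with the $C^1$ regularity of the $u_k$ should handle the first point, and boundedness on the compact interval $[0,1]$ handles the second.
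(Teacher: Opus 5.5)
Your proposal is correct and follows the paper's proof essentially step for step. You use the Vandermonde contradiction and Lemma~\ref{lem:Sr_Tr_identities} to show that the first non-vanishing $S_{r_0}$ is a nonzero constant, and the identities $(*)$ at index $r_0$ then give $V=A+\frac{2}{C}S_{r_0+1}$ with the same constants; analyticity and periodicity come from the closed polynomial system and uniqueness for the Cauchy problem. The only cosmetic differences are these: you skip the paper's preliminary removal of zero coefficients, since your Vandermonde contradiction handles that case directly, and you use openness of the maximal existence interval where the paper solves local Cauchy problems at the endpoints.
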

	
	\begin{proof}
		1. Delete the terms with zero coefficients in \eqref{eq:uk_sq_dep} and renumber the remaining functions, retaining the notation $m$ for their number. We may thus assume that $c_k\ne0$ for all $k=1,\dots,m$ and that $m\ge2$, since for $m=1$ the equality $c_1u_1^2\equiv0$ would contradict $u_1\not\equiv0$.
		
		Consider the functions $S_r(x)$ and $T_r(x)$ defined in \eqref{eq:Sr_Tr_def}. Condition \eqref{eq:uk_sq_dep} gives $S_0(x)\equiv0$ on $[0,1]$. We claim that $S_0(x),S_1(x),\dots,S_{m-1}(x)$ cannot all vanish identically on $[0,1]$. Indeed, if
		\[
		S_r(x) = \sum_{k=1}^m \lambda_k^r \big(c_k u_k^2(x)\big) = 0, \qquad r = 0, 1, \dots, m-1, \quad \forall x \in [0,1],
		\]
		then invertibility of the Vandermonde matrix $(\lambda_k^r)_{r,k=0}^{m-1}$ for pairwise distinct $\lambda_k$ would imply $c_ku_k^2(x)=0$ for all $x\in[0,1]$ and all $k=1,\dots,m$. Since $c_k\ne0$, this would give $u_k(x)\equiv0$, contradicting the nontriviality of the solutions.
		
		Therefore there is a smallest index $r_0\in\{1,\dots,m-1\}$ for which $S_{r_0}(x)\not\equiv0$. Since $S_{r_0-1}(x)\equiv0$, Lemma~\ref{lem:Sr_Tr_identities}, specifically \eqref{eq:Sr_plus_1_const} with $r=r_0-1$, shows that $S_{r_0}(x)$ is identically equal to a nonzero constant:
		\begin{equation}\label{eq:Sp_const_nonzero}
			S_{r_0}(x) \equiv C \neq 0, \qquad \forall x \in [0,1].
		\end{equation}
		
		2. Since $S_{r_0}(x)\equiv C$, we have $S_{r_0}'(x)\equiv0$ and $S_{r_0}''(x)\equiv0$. Substituting $S_{r_0}=C$ and $S_{r_0}'=S_{r_0}''=0$ into \eqref{eq:star_identities} with $r=r_0$, we obtain almost everywhere on $(0,1)$
		\begin{equation}\label{eq:Tp_Sp1_rel}
			T_{r_0} = S_{r_0+1} - C V, \qquad T_{r_0}' = -S_{r_0+1}'.
		\end{equation}
		The second equality in \eqref{eq:Tp_Sp1_rel} means that $T_{r_0}+S_{r_0+1}\in W^{1,p}(0,1)$ has zero weak derivative. It is therefore identically equal to a real constant $B\in\mathbb R$ on $[0,1]$:
		\[
		T_{r_0}(x) + S_{r_0+1}(x) = B, \qquad \forall x \in [0,1].
		\]
		Eliminating $T_{r_0}=B-S_{r_0+1}$ from the first equality in \eqref{eq:Tp_Sp1_rel}, we obtain $B-S_{r_0+1}=S_{r_0+1}-CV$, and hence the representation \eqref{eq:V_via_Sr_boxed} almost everywhere on $(0,1)$:
		\begin{equation}\label{eq:V_explicit_rep}
			V(x) = A + \frac{2}{C} S_{r_0+1}(x) = A + \sum_{k=1}^m b_k u_k^2(x), \qquad \text{where } A := -\frac{B}{C}, \;\; b_k := \frac{2 c_k \lambda_k^{r_0+1}}{C}.
		\end{equation}
		
		3. The right-hand side of \eqref{eq:V_explicit_rep} belongs to $W^{2,p}(0,1)\hookrightarrow C^1[0,1]$. Substituting this continuous representation of $V(x)$ into \eqref{eq:sl_gen_uk}, we obtain a closed autonomous system of ordinary differential equations:
		\begin{equation}\label{eq:closed_poly_sys}
			u_k''(x) = \left( A + \sum_{j=1}^m b_j u_j^2(x) - \lambda_k \right) u_k(x), \qquad k = 1, \dots, m.
		\end{equation}
		Because the right-hand side of \eqref{eq:closed_poly_sys} is continuous on $[0,1]$ and $u_k'$ is absolutely continuous, integration shows that $u_k\in C^2[0,1]$ and that \eqref{eq:closed_poly_sys} holds classically at every $x\in[0,1]$.
		
		Rewrite the second-order system \eqref{eq:closed_poly_sys} as a first-order system for $Z(x)=(u_1(x),\dots,u_m(x),v_1(x),\dots,v_m(x))^T\in\mathbb R^{2m}$, where $v_k=u_k'$:
		\begin{equation}\label{eq:first_order_poly}
			Z'(x) = \mathcal{P}(Z(x)),
		\end{equation} where \[ \mathcal{P}(u, v) := \left( v_1, \dots, v_m, \, \Big(A + \sum_{j=1}^m b_j u_j^2 - \lambda_1\Big)u_1, \dots, \Big(A + \sum_{j=1}^m b_j u_j^2 - \lambda_m\Big)u_m \right)^T.\]
		
		The map $\mathcal P:\mathbb R^{2m}\to\mathbb R^{2m}$ is polynomial with real coefficients and hence real analytic on all of $\mathbb R^{2m}$. By the analyticity theorem for ODEs with analytic right-hand sides, in a neighborhood of every $x_0\in[0,1]$---including the endpoints $x_0=0$ and $x_0=1$, where one solves the Cauchy problems with data $Z(0)$ and $Z(1)$---there is a unique local real-analytic solution. By uniqueness, it agrees with $Z(x)$ wherever both are defined. Consequently, the functions $u_k(x)$ extend analytically to real-analytic functions $\widetilde u_k(x)$ on some open interval $(-\varepsilon,1+\varepsilon)$, $\varepsilon>0$.
		
		Setting
		\[
		\tilde{V}(x) := A + \sum_{k=1}^m b_k \tilde{u}_k^2(x), \qquad x \in (-\varepsilon, 1 + \varepsilon),
		\]
		we obtain a real-analytic function $\widetilde V$ on $(-\varepsilon,1+\varepsilon)$ that agrees with $V(x)$ almost everywhere on $(0,1)$ by \eqref{eq:V_explicit_rep}. Finally, if $V$ and $u_k$ are $1$-periodic, so that $u_k(0)=u_k(1)$ and $u_k'(0)=u_k'(1)$, then $Z(0)=Z(1)$. Uniqueness for the Cauchy problem associated with the autonomous system \eqref{eq:first_order_poly} implies that the analytic solutions $\widetilde u_k(x)$ and the potential $\widetilde V(x)$ are $1$-periodic real-analytic functions on $\mathbb R$, equivalently on the circle $\mathbb R/\mathbb Z$. The proof is complete.
	\end{proof}
	
	\subsection{Existence and representation of OISP solutions in $L^p(0,1)$}
	
	To write the analytic representation of an optimal potential in $L^p(0,1)$, we use the standard notation for the signed power of a real number $s\in\mathbb R$ with exponent $r>0$:
	\begin{equation}\label{eq:signed_power}
		s^{\langle r \rangle} := |s|^{r-1} s = |s|^r \sgn(s) \quad (0^{\langle r \rangle} := 0).
	\end{equation}
	The map $s\mapsto s^{\langle r\rangle}$ is an odd strictly increasing homeomorphism of $\mathbb R$ onto itself, with inverse $t\mapsto t^{\langle1/r\rangle}$. In particular, conjugate exponents $p,q\in(1,\infty)$ satisfy $(p-1)(q-1)=1$, and therefore
	\begin{equation}\label{eq:signed_power_inv}
		\left( s^{\langle p-1 \rangle} \right)^{\langle q-1 \rangle} = s, \quad \forall s \in \mathbb{R}, \quad \text{where } q - 1 = \frac{1}{p-1}.
	\end{equation}
	
	\begin{theorem}[Existence and analytic representation of an OISP solution in $L^p(0,1)$]\label{thm:main_lp}
		Let $1<p<\infty$, $q=\frac{p}{p-1}$, and let a reference potential $V_0\in L^p(0,1)$, indices $1\le i_1<i_2<\dots<i_m$, and a strictly increasing vector of target eigenvalues $\Lambda^*=(\lambda_1^*,\dots,\lambda_m^*)^T\in\Omega_m$ be given. Then:
		\begin{enumerate}
			\item \textbf{(Existence of a global solution).} The optimization inverse spectral problem $\mathcal P_p(m)$ in \eqref{eq:oisp_dist}--\eqref{eq:oisp_var} has at least one global solution $\hat V\in\ISM_m^p(\Lambda^*)$.
			\item \textbf{(Representation through squared eigenfunctions).} For every local or global minimizer $\hat V\in\ISM_m^p(\Lambda^*)$ of \eqref{eq:oisp_var}, there exist unique Lagrange multipliers $\alpha_1,\dots,\alpha_m\in\mathbb R$ such that the optimal potential has the representation
			\begin{equation}\label{eq:lp_rep}
				\hat{V}(x) = V_0(x) - \left( \sum_{k=1}^m \alpha_k \varphi_{i_k}^2(\hat{V})(x) \right)^{\langle \frac{1}{p-1} \rangle} ,
			\end{equation}
			where $\varphi_{i_k}(\hat V)$ are the $L^2(0,1)$-normalized eigenfunctions of $\mathcal L[\hat V]$. If $\lambda_{i_k}(V_0)\ne\lambda_k^*$ for at least one $k\in\{1,\dots,m\}$, then $\sum_{k=1}^m|\alpha_k|>0$.
			\item \textbf{(Regularity of the correction).} The optimal perturbation $\hat V-V_0$ is continuous on $[0,1]$ and vanishes at $x=0$ and $x=1$: $\hat V-V_0\in C_0[0,1]$. Moreover, for $1<p\le2$ the correction belongs to $C^1[0,1]$.
			\item \textbf{(Relation to a nonlinear boundary value system).} Set $\sigma_k=\sgn(\alpha_k)\in\{-1,0,+1\}$ and $\hat u_k(x)=\sqrt{|\alpha_k|}\,\varphi_{i_k}(\hat V)(x)$, $k=1,\dots,m$. Then the vector function $(\hat u_1,\dots,\hat u_m)\in(W^{2,p}(0,1)\cap W_0^{1,2}(0,1))^m$ satisfies the boundary value problem for the nonlinear system
			\begin{equation}\label{eq:lp_nls_sys}
				\left\{ \begin{aligned}
					-\hat{u}_k''(x) + V_0(x)\hat{u}_k(x) &= \lambda_k^* \hat{u}_k(x) + \left( \sum_{j=1}^m \sigma_j \hat{u}_j^2(x) \right)^{\langle \frac{1}{p-1} \rangle} \hat{u}_k(x), \quad x \in (0,1), \\
					\hat{u}_k(0) = \hat{u}_k(1) &= 0, \quad k = 1, \dots, m,
				\end{aligned} \right.
			\end{equation}
			The functions $\{\hat u_k\}_{k=1}^m$ are pairwise orthogonal in $L^2(0,1)$, and, if $\alpha_k\ne0$, then $\hat u_k$ has exactly $i_k-1$ simple interior zeros. The vector $\hat u$ is nontrivial if $F_m(V_0)\ne\Lambda^*$. If $F_m(V_0)=\Lambda^*$, then the unique minimum is $V_0$, and all multipliers and all scaled components vanish.
		\end{enumerate}
	\end{theorem}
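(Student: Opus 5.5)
For existence, I would use the direct method in the reflexive space $L^p(0,1)$. Part~1 of Proposition~\ref{prop:p2_manifold} gives a bounded potential $W\in\ISM_m^2(\Lambda^*)\cap L^\infty\subset\ISM_m^p(\Lambda^*)$, so the infimum $d:=\dist_{L^p}(V_0,\ISM_m^p(\Lambda^*))$ is finite. Take a minimizing sequence $V_j\in\ISM_m^p(\Lambda^*)$. It is bounded in $L^p$, so reflexivity yields a subsequence with $V_j\rightharpoonup\hat V$ in $L^p(0,1)$. Lemma~\ref{lem:weak_cont_lp} gives $\lambda_{i_k}(\hat V)=\lim\lambda_{i_k}(V_j)=\lambda_k^*$, hence $\hat V\in\ISM_m^p(\Lambda^*)$. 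Weak lower semicontinuity of the norm then gives $\|\hat V-V_0\|_{L^p}\le d$, so $\hat V$ is a global minimizer.

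For the representation, I would apply the Lagrange multiplier rule in Banach spaces. The functional $J_p$ is $C^1$ on $L^p$ for $p>1$, with derivative $\langle DJ_p(V),h\rangle=\int (V-V_0)^{\langle p-1\rangle}h\,dx$. By Lemma~\ref{lem:frechet_indep_lp}, the constraint map $F_m$ is $C^1$ and $DF_m(\hat V)$ is surjective, because the squares $\varphi_{i_k}^2(\hat V)$ are linearly independent. Hence there are unique $\alpha_k$ with
\[
(\hat V-V_0)^{\langle p-1\rangle}=-\sum_k\alpha_k\varphi_{i_k}^2(\hat V)
\]
as elements of $L^q$, and uniqueness follows from the same independence. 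Applying the inverse signed power \eqref{eq:signed_power_inv} with exponent $q-1=1/(p-1)$ gives \eqref{eq:lp_rep}. If all $\alpha_k=0$, then $\hat V=V_0$ and so $F_m(V_0)=\Lambda^*$; this proves the nontriviality claim by contraposition. Conversely, if $F_m(V_0)=\Lambda^*$, then $V_0$ is admissible with $J_p(V_0)=0$, so it is the unique minimizer, and the multipliers vanish by uniqueness.

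Part~3 follows from the representation. Set $g:=\sum\alpha_k\varphi_{i_k}^2$. Then $g\in W^{2,p}\subset C^1[0,1]$ and $g(0)=g(1)=0$, since the eigenfunctions satisfy the Dirichlet conditions. Since $s\mapsto s^{\langle 1/(p-1)\rangle}$ is continuous and maps $0$ to $0$, we get $\hat V-V_0\in C_0[0,1]$. For $1<p<2$ the exponent $r=1/(p-1)>1$, so $s\mapsto s^{\langle r\rangle}$ is $C^1$. The chain rule then gives $(g^{\langle r\rangle})'=r|g|^{r-1}g'$, which is continuous, and for $p=2$ the claim is immediate. This step needs only care with the chain rule for $C^1$ functions composed with a $C^1$ map and poses no difficulty.

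For part~4, set $\hat u_k=\sqrt{|\alpha_k|}\varphi_{i_k}$, so that $\alpha_k\varphi_{i_k}^2=\sigma_k\hat u_k^2$. Substituting \eqref{eq:lp_rep} into $-\varphi''+\hat V\varphi=\lambda_k^*\varphi$ and multiplying by $\sqrt{|\alpha_k|}$ gives \eqref{eq:lp_nls_sys}. The sign works out because $-\hat V=-V_0+(\sum\sigma_j\hat u_j^2)^{\langle 1/(p-1)\rangle}$. Orthogonality holds because the $\hat u_k$ are eigenfunctions of the self-adjoint $\mathcal L[\hat V]$ with distinct eigenvalues. The count of $i_k-1$ simple zeros follows from Sturm's oscillation theorem applied to $\varphi_{i_k}$.

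The main obstacle is the existence step, specifically passing the constraint to the weak limit. This is exactly what Lemma~\ref{lem:weak_cont_lp} supplies, together with nonemptiness of the isospectral set in $L^p$, which comes from bounded elements of $\ISM^2_m$. In the multiplier step, one must check that $J_p$ is Fréchet differentiable with the stated derivative for $1<p<2$; this is a standard Nemytskii-operator fact.
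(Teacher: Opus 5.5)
Your proposal is correct and follows essentially the same route as the paper. Existence comes from the direct method in reflexive $L^p(0,1)$, with Lemma~\ref{lem:weak_cont_lp} passing the spectral constraints to the weak limit. The representation comes from the Ljusternik multiplier rule, which the linear independence in Lemma~\ref{lem:frechet_indep_lp} makes available, followed by inversion of the signed power via \eqref{eq:signed_power_inv} and substitution into the eigenvalue equation. The differences are minor: you take nonemptiness of $\ISM_m^p(\Lambda^*)$ from part~1 of Proposition~\ref{prop:p2_manifold} rather than from the cited P\"oschel--Trubowitz and graph constructions, and you explicitly check the closing claim for $F_m(V_0)=\Lambda^*$, which the paper's proof leaves implicit.
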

	
	\begin{proof}
		\textit{Step 1.}
		The general theory of inverse spectral problems \cite{Poschel1987}, as well as the more general direct variational construction of potential wells separated by barriers proved in Theorem~3.1 of \cite{Valeev2026Graph}, shows that for every strictly increasing set $\lambda_1^*<\dots<\lambda_m^*$ there is a bounded potential $V_{\mathrm{adm}}\in L^\infty(0,1)$ such that $\lambda_{i_k}(V_{\mathrm{adm}})=\lambda_k^*$, $k=1,\dots,m$. Since $L^\infty(0,1)\subset L^p(0,1)$ on the finite interval $(0,1)$, the set $\ISM_m^p(\Lambda^*)$ is nonempty and the infimum in \eqref{eq:oisp_var} is finite:
		\begin{equation}\label{eq:beta_p_fin}
			0 \le \beta_p := \inf_{V \in \ISM_m^p(\Lambda^*)} J_p(V) \le J_p(V_{\mathrm{adm}}) < +\infty.
		\end{equation}

		Choose a minimizing sequence $\{V_j\}_{j=1}^\infty\subset\ISM_m^p(\Lambda^*)$ such that $J_p(V_j)\to\beta_p$ and $\lambda_{i_k}(V_j)=\lambda_k^*$ for all $k=1,\dots,m$ and $j\ge1$. The triangle inequality gives the coercivity estimate
		\[
		\|V_j\|_{L^p} \le \|V_0\|_{L^p} + \|V_j - V_0\|_{L^p} = \|V_0\|_{L^p} + \left( p J_p(V_j) \right)^{1/p} \le \|V_0\|_{L^p} + \left( p(\beta_p + 1) \right)^{1/p}
		\]
		for all sufficiently large $j$. Thus $\{V_j\}$ is bounded in $L^p(0,1)$.
		
		\textit{Step 2 (existence of a minimizer).}
		For $1<p<\infty$, the Banach space $L^p(0,1)$ is reflexive. Hence a subsequence $\{V_{j_s}\}_{s=1}^\infty$ converges weakly in $L^p(0,1)$ to some $\hat V\in L^p(0,1)$: $V_{j_s}\rightharpoonup\hat V$ as $s\to\infty$. By Lemma~\ref{lem:weak_cont_lp}, this weak convergence implies convergence of every eigenvalue:
		\begin{equation}\label{eq:lim_lambda_hat}
			\lambda_{i_k}(\hat{V}) = \lim_{s \to \infty} \lambda_{i_k}(V_{j_s}) = \lambda_k^*, \quad k = 1, \dots, m.
		\end{equation}
		Thus $\hat V\in\ISM_m^p(\Lambda^*)$. On the other hand, $J_p(V)=\frac1p\|V-V_0\|_{L^p}^p$ is weakly sequentially lower semicontinuous, so $(V_{j_s}-V_0)\rightharpoonup\hat V-V_0$ gives
		\begin{equation}\label{eq:wlsc_jp}
			\beta_p \le J_p(\hat{V}) \le \liminf_{s \to \infty} J_p(V_{j_s}) = \beta_p.
		\end{equation}
		Hence $J_p(\hat V)=\beta_p$, so $\hat V$ is a global solution of the OISP $\mathcal P_p(m)$. Since $L^p(0,1)$ is uniformly convex, the Radon--Riesz theorem implies that $V_{j_s}\to\hat V$ strongly in $L^p(0,1)$.
		
		\textit{Step 3 (Lagrange multiplier rule).}
		Let $\hat V\in\ISM_m^p(\Lambda^*)$ be any local or global minimizer of \eqref{eq:oisp_var}. For $1<p<\infty$, the functional $J_p:L^p(0,1)\to\mathbb R$ is continuously Fr\'echet differentiable on all of $L^p(0,1)$, and its Fr\'echet derivative at $\hat V$ is given by the canonical duality map from $L^p(0,1)$ to $L^q(0,1)$:
		\begin{equation}\label{eq:djp_dual}
			\begin{aligned}
 \langle DJ_p(\hat{V}),h\rangle
 &=\int_0^1(\hat V-V_0)^{\langle p-1\rangle}h\,dx\\
 &=\int_0^1|\hat V-V_0|^{p-2}(\hat V-V_0)h\,dx.
 \end{aligned}
		\end{equation}
		for every $h\in L^p(0,1)$. Indeed, since $\hat V-V_0\in L^p(0,1)$, the function $w(x):=(\hat V(x)-V_0(x))^{\langle p-1\rangle}$ satisfies $|w(x)|^q=|\hat V(x)-V_0(x)|^{(p-1)q}=|\hat V(x)-V_0(x)|^p\in L^1(0,1)$. Thus $w\in L^q(0,1)=(L^p(0,1))^*$.
		
		By Lemma~\ref{lem:frechet_indep_lp}, the constraint functionals $\lambda_{i_k}:L^p(0,1)\to\mathbb R$, $k=1,\dots,m$, belong to $C^1(L^p(0,1))$, and their Fr\'echet derivatives $D\lambda_{i_k}(\hat V)=\varphi_{i_k}^2(\hat V)\in L^q(0,1)$ are linearly independent. Hence $\hat V$ is a regular point of the constraint map $F_m$. The Ljusternik multiplier theorem in Banach spaces yields unique real Lagrange multipliers $\alpha_1,\dots,\alpha_m$ such that
		\begin{equation}\label{eq:lagrange_lusternik}
			DJ_p(\hat{V}) + \sum_{k=1}^m \alpha_k D\lambda_{i_k}(\hat{V}) = 0 \quad \text{in } L^q(0,1).
		\end{equation}
		In integral form, for every test function $h\in L^p(0,1)$ this means
		\begin{equation}\label{eq:lagrange_int}
			\int_0^1 \left[ \big(\hat{V}(x) - V_0(x)\big)^{\langle p-1 \rangle} + \sum_{k=1}^m \alpha_k \varphi_{i_k}^2(\hat{V})(x) \right] h(x) \, dx = 0,
		\end{equation}
		and the fundamental lemma of the calculus of variations gives, almost everywhere on $(0,1)$,
		\begin{equation}\label{eq:lagrange_ae}
			\big(\hat{V}(x) - V_0(x)\big)^{\langle p-1 \rangle} = -\sum_{k=1}^m \alpha_k \varphi_{i_k}^2(\hat{V})(x).
		\end{equation}
		
		\textit{Step 4 (derivation of \eqref{eq:lp_rep}).}
		Apply to both sides of \eqref{eq:lagrange_ae} the inverse odd power map $t\mapsto t^{\langle q-1\rangle}=t^{\langle\frac1{p-1}\rangle}$, using \eqref{eq:signed_power_inv}. Almost everywhere on $(0,1)$, this gives
		\begin{equation}\label{eq:lp_rep_inverted}
			\hat{V}(x) - V_0(x) = -\left( \sum_{k=1}^m \alpha_k \varphi_{i_k}^2(\hat{V})(x) \right)^{\langle \frac{1}{p-1} \rangle},
		\end{equation}
		which is precisely \eqref{eq:lp_rep}. Uniqueness of $\alpha_1,\dots,\alpha_m$ follows from uniqueness of the left-hand side of \eqref{eq:lagrange_ae} and linear independence of $\varphi_{i_1}^2(\hat V),\dots,\varphi_{i_m}^2(\hat V)$ \cite{IlyasovValeev2021}. If $\lambda_{i_k}(V_0)\ne\lambda_k^*$ for at least one $k$, then $\hat V\ne V_0$ in $L^p(0,1)$, and \eqref{eq:lp_rep} implies $\sum_{k=1}^m|\alpha_k|>0$.
		
		\textit{Step 5 (the nonlinear Schr\"odinger system \eqref{eq:lp_nls_sys}).}
		Since $\hat V\in L^p(0,1)$, every eigenfunction $\varphi_{i_k}(\hat V)$ belongs to $W^{2,p}(0,1)\hookrightarrow C^1[0,1]$ and vanishes at $x=0$ and $x=1$. Thus $\Phi(x):=\sum_{k=1}^m\alpha_k\varphi_{i_k}^2(\hat V)(x)$ belongs to $C^1[0,1]$ and satisfies $\Phi(0)=\Phi(1)=0$. The map $s\mapsto s^{\langle\frac1{p-1}\rangle}$ is continuous on $\mathbb R$ for every $p\in(1,\infty)$ and continuously differentiable on $\mathbb R$ when $\frac1{p-1}\ge1$, that is, when $1<p\le2$. Therefore the optimal correction $\hat V(x)-V_0(x)=-(\Phi(x))^{\langle\frac1{p-1}\rangle}$ belongs to $C_0[0,1]$ for all $1<p<\infty$ and to $C^1[0,1]$ for $1<p\le2$.
		
		Finally, set $\sigma_k=\sgn(\alpha_k)$ and $\hat u_k(x)=\sqrt{|\alpha_k|}\,\varphi_{i_k}(\hat V)(x)$, so that $\alpha_k\varphi_{i_k}^2=\sigma_k\hat u_k^2$ \cite{IlyasovValeev2021}. Then \eqref{eq:lp_rep} becomes
		\begin{equation}\label{eq:lp_pot_u}
			\hat{V}(x) = V_0(x) - \left( \sum_{j=1}^m \sigma_j \hat{u}_j^2(x) \right)^{\langle \frac{1}{p-1} \rangle}.
		\end{equation}
		Substituting \eqref{eq:lp_pot_u} for $\hat V(x)$ into $-\varphi_{i_k}''(\hat V)+\hat V\varphi_{i_k}(\hat V)=\lambda_k^*\varphi_{i_k}(\hat V)$ and multiplying by $\sqrt{|\alpha_k|}$ yields the closed nonlinear system \eqref{eq:lp_nls_sys}. Orthogonality of the functions $\hat u_k$ in $L^2(0,1)$ and, when $\alpha_k\ne0$, the fact that $\hat u_k$ has exactly $i_k-1$ interior zeros in $(0,1)$ follow from self-adjointness of $\mathcal L[\hat V]$ and Sturm's oscillation theorem.
	\end{proof}
	
	\begin{remark}[Properties of the optimal potential for different values of $p$]\label{rem:p_regimes}
Formula~\eqref{eq:lp_rep} describes the action of the signed power on $\Phi=\sum_k\alpha_k\varphi_{i_k}^2$:
\begin{itemize}
\item For $p=2$, the exponent is one, and we recover the classical representation~\eqref{eq:p2_rep} and the cubic system~\eqref{eq:p2_nls}~\cite{ValeevIlyasov2018MZ,IlyasovValeev2021}.
\item For $1<p<2$, the exponent is greater than one, and the correction belongs to $C^1[0,1]$.
\item For $p>2$, the exponent is less than one; the correction remains continuous, but differentiability at zeros of $\Phi$ does not follow from this formula.
\end{itemize}
\end{remark}
\begin{remark}
The limiting problems in $L^1$ and $L^\infty$ require separate formulations and proofs.

We record only the following \textbf{two conjectures}:

1) As $p\to1+$, optimal corrections tend to linear combinations of Dirac delta measures concentrated at extremal points.

2) As $p\to+\infty$, the correction $\hat V(x)-V_0(x)$ tends to a piecewise constant function of the form $-C\sgn\!\left(\sum_{k=1}^m\alpha_k\varphi_{i_k}^2(x)\right)$ of bang--bang type that minimizes the uniform norm $\|\hat V-V_0\|_{L^\infty(0,1)}$.
\end{remark}
	
	\section{Isolation and generic uniqueness of OISP solutions}
	
	Because the isospectral manifold $\ISM_m^p(\Lambda^*)$ is a nonlinear and nonconvex subset of $L^p(0,1)$, strict convexity of the objective functional $J_p(V)=\frac1p\|V-V_0\|_{L^p}^p$ alone does not guarantee global uniqueness of a nearest point for an arbitrary reference potential $V_0$. Nevertheless, OISP solutions possess local isolation, compactness, and generic uniqueness properties.
	
	\subsection{Isolation of solutions and finiteness of the set of global minima}
	
	Let $\hat V\in\ISM_m^p(\Lambda^*)$ be an OISP solution. By Theorem~\ref{thm:main_lp}, $\hat V$ satisfies the Euler--Lagrange equation \eqref{eq:lp_rep}:
	\begin{equation}\label{eq:euler_lagrange_op}
		\hat{V} = V_0 - \mathcal{N}_p(\alpha, \hat{V}), \quad \text{where } \mathcal{N}_p(\alpha, V)(x) := \left( \sum_{k=1}^m \alpha_k \varphi_{i_k}^2(V)(x) \right)^{\langle \frac{1}{p-1} \rangle}.
	\end{equation}
	
	\begin{lemma}[Compactness of the set of global OISP solutions]\label{lem:compact_sol_set}
		For every fixed $V_0\in L^p(0,1)$, $1<p<\infty$, and $\Lambda^*\in\Omega_m$, the set $\mathcal S_{\min}(V_0;\Lambda^*)$ of all global solutions of $\mathcal P_p(m)$ in \eqref{eq:oisp_var} is a nonempty compact subset of the Banach space $L^p(0,1)$. Moreover, it is compact in the affine space $V_0+C_0[0,1]$.
	\end{lemma}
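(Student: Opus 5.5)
Nonemptiness is part~1 of Theorem~\ref{thm:main_lp}, so only compactness needs an argument. I will prove sequential compactness in $L^p(0,1)$ first and then upgrade it to the $C_0[0,1]$ statement using the representation \eqref{eq:lp_rep}.

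\textbf{Compactness in $L^p(0,1)$.} Take any sequence $\{W_j\}\subset\mathcal S_{\min}(V_0;\Lambda^*)$. Then $J_p(W_j)=\beta_p$ for all $j$, so the sequence is a (constant-value) minimizing sequence and is bounded in $L^p(0,1)$. Repeating Steps~1--2 of the proof of Theorem~\ref{thm:main_lp}:
\begin{itemize}
\item reflexivity gives a weakly convergent subsequence $W_{j_s}\rightharpoonup W$;
\item Lemma~\ref{lem:weak_cont_lp} gives $W\in\ISM_m^p(\Lambda^*)$;
\item weak lower semicontinuity gives $J_p(W)\le\beta_p$, so $J_p(W)=\beta_p$ and $W\in\mathcal S_{\min}$.
\end{itemize}
Moreover $W_{j_s}-V_0\rightharpoonup W-V_0$ and $\|W_{j_s}-V_0\|_{L^p}=(p\beta_p)^{1/p}=\|W-V_0\|_{L^p}$. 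Since $L^p(0,1)$ is uniformly convex, Radon--Riesz gives $W_{j_s}\to W$ strongly. This proves sequential compactness, hence compactness, in the metric space $L^p(0,1)$.

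\textbf{Compactness in $V_0+C_0[0,1]$.} By part~3 of Theorem~\ref{thm:main_lp}, $\mathcal S_{\min}\subset V_0+C_0[0,1]$. Take the subsequence above and let $\alpha^{(s)}\in\mathbb R^m$ be the multipliers of $W_{j_s}$ and $\alpha$ those of $W$. Write $\Phi_s=\sum_k\alpha^{(s)}_k\varphi_{i_k}^2(W_{j_s})$.

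First I bound the multipliers uniformly. From \eqref{eq:lagrange_ae}, $\|\Phi_s\|_{L^q}=\|W_{j_s}-V_0\|_{L^p}^{p-1}$, which is constant in $s$. By Lemma~\ref{lem:weak_cont_lp}, $\varphi_{i_k}(W_{j_s})\to\varphi_{i_k}(W)$ in $C^1[0,1]$, hence in $C[0,1]$.

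The main obstacle is to conclude that the vectors $\alpha^{(s)}$ are bounded. I will use the Gram matrix $G_s=\bigl(\int\varphi_{i_k}^2\varphi_{i_l}^2\bigr)$ of the squared eigenfunctions of $W_{j_s}$. It converges to the Gram matrix $G$ for $W$, and $G$ is positive definite by part~2 of Lemma~\ref{lem:frechet_indep_lp}. Hence the smallest eigenvalue of $G_s$ is bounded below uniformly in $s$ for large $s$. This gives
\[
|\alpha^{(s)}|^2\le c\,\|\Phi_s\|_{L^2}^2\le c'\,\|\Phi_s\|_{L^q}\|\Phi_s\|_{L^\infty}.
\]
The factor $\|\Phi_s\|_{L^\infty}$ is itself $\lesssim|\alpha^{(s)}|$, because the eigenfunctions are uniformly bounded in $C[0,1]$. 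Combining, $|\alpha^{(s)}|\le C$ uniformly.

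Next I identify the limit of the multipliers. Pass to a further subsequence with $\alpha^{(s)}\to\alpha'$. Then $\Phi_s\to\Phi':=\sum_k\alpha'_k\varphi_{i_k}^2(W)$ uniformly. On the other hand, $\Phi_s=-(W_{j_s}-V_0)^{\langle p-1\rangle}$, and this converges in $L^q$ to $-(W-V_0)^{\langle p-1\rangle}$ by the strong $L^p$ convergence. So $\Phi'=-(W-V_0)^{\langle p-1\rangle}$, and uniqueness of the multipliers gives $\alpha'=\alpha$.

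Finally, $t\mapsto t^{\langle 1/(p-1)\rangle}$ is uniformly continuous on bounded sets. Therefore $W_{j_s}-V_0=-\Phi_s^{\langle1/(p-1)\rangle}\to W-V_0$ uniformly on $[0,1]$. This gives sequential compactness of $\mathcal S_{\min}$ in $V_0+C_0[0,1]$.
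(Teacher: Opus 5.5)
Your proposal is correct and follows the paper's proof: weak compactness, weak continuity of the spectrum (Lemma~\ref{lem:weak_cont_lp}) and Radon--Riesz give compactness in $L^p(0,1)$, and then Gram-matrix control of the multipliers, $C^1$ convergence of the eigenfunctions, continuity of the duality map and uniform continuity of the signed power give compactness in $V_0+C_0[0,1]$. The only difference is in the multiplier step: the paper reads off convergence of $\alpha^{(s)}$ directly from $\alpha^{(s)}=-G_s^{-1}\bigl(\int (W_{j_s}-V_0)^{\langle p-1\rangle}\varphi_{i_l}^2(W_{j_s})\,dx\bigr)_{l=1}^m$, whereas you first bound $\alpha^{(s)}$ and then identify subsequential limits; both routes are valid.
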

	
	\begin{proof}
		Nonemptiness of $\mathcal S_{\min}(V_0;\Lambda^*)$ was proved in Theorem~\ref{thm:main_lp}. Let $\{\hat V_j\}_{j=1}^\infty\subset\mathcal S_{\min}(V_0;\Lambda^*)$ be any sequence of global minimizers. Then $\|\hat V_j-V_0\|_{L^p}=(p\beta_p)^{1/p}$ for all $j\ge1$, so $\{\hat V_j\}$ is bounded in $L^p(0,1)$. Passing to a subsequence, $\hat V_{j_s}\rightharpoonup\hat V_*$ weakly in $L^p(0,1)$. By Lemma~\ref{lem:weak_cont_lp}, $\hat V_*\in\ISM_m^p(\Lambda^*)$, and hence $\|\hat V_*-V_0\|_{L^p}\ge(p\beta_p)^{1/p}$. Weak lower semicontinuity of the norm gives the reverse inequality, $\|\hat V_*-V_0\|_{L^p}\le\lim_{s\to\infty}\|\hat V_{j_s}-V_0\|_{L^p}=(p\beta_p)^{1/p}$. Thus $\|\hat V_*-V_0\|_{L^p}=\lim_{s\to\infty}\|\hat V_{j_s}-V_0\|_{L^p}$. Uniform convexity of $L^p(0,1)$ for $1<p<\infty$, through the Radon--Riesz theorem, now yields strong convergence $\hat V_{j_s}\to\hat V_*$ in $L^p(0,1)$ and $\hat V_*\in\mathcal S_{\min}(V_0;\Lambda^*)$. For the last assertion, set $g_k(V)=\varphi_{i_k}(V)^2$ and $G(V)_{kl}=\int g_k(V)g_l(V)$. Formula~\eqref{eq:lagrange_ae} gives
\[
 \alpha^{(j_s)}=-G(\hat V_{j_s})^{-1}
 \left(\int (\hat V_{j_s}-V_0)^{\langle p-1\rangle}g_l(\hat V_{j_s})\right)_{l=1}^m.
\]
The duality map $L^p\to L^q$ is continuous, the eigenfunctions converge in $C^1$, and the limiting Gram matrix is invertible. Consequently, the multipliers converge. Formula~\eqref{eq:lp_rep} and uniform continuity of the signed power on bounded intervals imply uniform convergence of the corrections in $C_0[0,1]$.
	\end{proof}
	
	For the second-order conditions in this subsection, set $p=2$ and write $M=\ISM_m^2(\Lambda^*)$ and $H=L^2(0,1)$. The Hessian statements below are not used for $p\ne2$. The Lagrangian is
\begin{equation}\label{eq:lagrangian_def}
 \mathcal L_2(V,\alpha)=\tfrac12\|V-V_0\|_2^2+
 \sum_{k=1}^m\alpha_k(\lambda_{i_k}(V)-\lambda_k^*).
\end{equation}
Let $P_kf=\langle f,\varphi_{i_k}\rangle\varphi_{i_k}$ and
$R_k=(\mathcal L[V]-\lambda_{i_k}(V))^{-1}_{\perp}(I-P_k)$; thus $R_k$ vanishes on the eigenfunction. Then, for $h\in H$,
\begin{equation}\label{eq:second_var_lambda}
 D^2\lambda_{i_k}(V)[h,h]=-2\langle R_k(h\varphi_{i_k}),h\varphi_{i_k}\rangle_2.
\end{equation}
Indeed, differentiating the eigenvalue equation and the normalization gives $D\varphi_{i_k}(V)[h]=-R_k(h\varphi_{i_k})$; one then differentiates the first-variation formula. The operator $K_k=-2M_{\varphi_{i_k}}R_kM_{\varphi_{i_k}}$, where $M_f$ denotes multiplication by $f$, is compact and self-adjoint: $R_k$ is compact and the eigenfunction is bounded. These operators depend continuously on $V$ in the operator norm; this also follows from analyticity of $V\mapsto\lambda_{i_k}(V)$.
\begin{equation}\label{eq:hessian_p2}
 H_{V,\alpha}:=D^2_{VV}\mathcal L_2(V,\alpha)=I+\sum_k\alpha_kK_k.
\end{equation}

\begin{definition}[Nondegeneracy of a stationary point and of a minimum]\label{def:nondeg}
Set
\begin{equation}\label{eq:tangent_space}
 T_V M=\{h\in H:\ \langle h,\varphi_{i_k}(V)^2\rangle=0,\ k=1,\ldots,m\}.
\end{equation}
A stationary pair $(V,\alpha)$ is called \emph{Morse nondegenerate} if
$A_{V,\alpha}=P_{T_VM}H_{V,\alpha}|_{T_VM}$ is invertible on $T_VM$.
A minimum is called \emph{strongly nondegenerate} if, for some $\gamma>0$,
\begin{equation}\label{eq:hessian_coercive}
 \langle H_{V,\alpha}h,h\rangle\ge\gamma\|h\|_2^2,
 \qquad h\in T_VM.
\end{equation}
For a local minimum, these two conditions are equivalent. Indeed, the second-order necessary condition gives $A_{V,\alpha}\ge0$, while $A_{V,\alpha}=I_{T_VM}+\text{a compact self-adjoint operator}$. Invertibility then provides a positive spectral gap. For an arbitrary stationary point, invertibility does not imply positivity.
\end{definition}

\begin{theorem}[Isolation and generic nondegeneracy for $p=2$]\label{thm:isolated_finite}
For fixed indices and $\Lambda^*\in\Omega_m$, the following statements hold:
\begin{enumerate}
\item Every Morse nondegenerate stationary point is isolated among stationary points. If it is a local minimum, then there exist $c,\delta>0$ such that
$J_2(W)\ge J_2(V)+c\|W-V\|_2^2$ for $W\in M$ with $\|W-V\|_2<\delta$.
\item If all global minima are nondegenerate, then their number is finite.
\item The set of centers $V_0\in H$ for which all stationary points are Morse nondegenerate is residual. The set of centers for which all global minima are nondegenerate is open and dense in $H$.
\end{enumerate}
\end{theorem}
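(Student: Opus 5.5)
The plan is to recast stationarity as the zero set of a smooth Fredholm map and then apply the implicit function theorem and Sard--Smale transversality. Define
\[
\Phi:H\times\mathbb R^m\to H\times\mathbb R^m,\qquad \Phi(V,\alpha)=\Bigl(V-V_0+\sum_{k=1}^m\alpha_k\varphi_{i_k}^2(V),\ F_m(V)-\Lambda^*\Bigr).
\]
By Theorem~\ref{thm:main_lp} with $p=2$, stationary points of $J_2$ on $M$ are exactly the zeros of $\Phi$. By Lemma~\ref{lem:frechet_indep_lp}, $\Phi$ is real analytic. Its derivative is the bordered operator
\[
D\Phi(V,\alpha)[h,\beta]=\Bigl(H_{V,\alpha}h+\sum_k\beta_k\varphi_{i_k}^2,\ (\langle\varphi_{i_k}^2,h\rangle)_k\Bigr).
\]
Since $H_{V,\alpha}=I+\text{compact}$ and the bordering has finite rank, $D\Phi$ is Fredholm of index zero.

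\emph{Step 1: isolation and the quadratic bound.} I will show that $D\Phi(V,\alpha)$ is invertible exactly when $A_{V,\alpha}$ is invertible on $T_VM$. This is the standard bordered-matrix argument, using the linear independence of $\varphi_{i_k}^2$ from Lemma~\ref{lem:frechet_indep_lp}. The inverse function theorem then makes $(V,\alpha)$ an isolated zero of $\Phi$. Because the multipliers are uniquely determined by $V$ (Theorem~\ref{thm:main_lp}, part 2) and depend continuously on $V$ through the Gram-matrix formula, $V$ is isolated among stationary points.

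For the quadratic growth, parametrize $M$ near $V$ by $W=V+h+\psi(h)$ with $h\in T_VM$ and $\|\psi(h)\|=O(\|h\|^2)$. On $M$ we have $J_2(W)=\mathcal L_2(W,\alpha)$, and $D_V\mathcal L_2(V,\alpha)=0$. A Taylor expansion gives
\[
J_2(W)-J_2(V)=\tfrac12\langle H_{V,\alpha}h,h\rangle+o(\|h\|^2)\ge\tfrac\gamma4\|h\|^2 .
\]
This uses the coercivity \eqref{eq:hessian_coercive}, which Definition~\ref{def:nondeg} shows is equivalent to Morse nondegeneracy at a minimum. Finally, $\|W-V\|\asymp\|h\|$.

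\emph{Step 2: finiteness.} By Lemma~\ref{lem:compact_sol_set}, $\mathcal S_{\min}$ is compact. By Step 1, each global minimum is isolated. A compact set of isolated points is finite.

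\emph{Step 3: genericity.} I will treat $V_0$ as a parameter and set $\widetilde\Phi(V,\alpha,V_0)=\Phi(V,\alpha)$. Its derivative with respect to $V_0$ is $(-I,0)$, whose range already covers the first component. The second component is covered by $D F_m$, which is surjective. Hence $D\widetilde\Phi$ is surjective at every zero, so $Z=\widetilde\Phi^{-1}(0)$ is an analytic Banach manifold. The projection $\pi:Z\to H$, $(V,\alpha,V_0)\mapsto V_0$, is Fredholm of index $0$, and $V_0$ is a regular value of $\pi$ exactly when all stationary points for that center are Morse nondegenerate. The Sard--Smale theorem requires $\sigma$-compactness or second countability; $Z$ is a separable metric manifold, hence Lindelöf. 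Sard--Smale then gives that the set of such centers is residual.

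For the open-and-dense statement, density follows from residuality via the Baire property. Openness will use Lemma~\ref{lem:compact_sol_set} together with upper semicontinuity of $V_0\mapsto\mathcal S_{\min}(V_0)$. That semicontinuity comes from the same weak-compactness plus Radon--Riesz argument as in the lemma, applied to a sequence of centers $V_0^{(n)}\to V_0$ strongly. If $V_0^{(n)}\to V_0$ and the minima for $V_0^{(n)}$ were degenerate, a subsequence would converge to a minimum for $V_0$, with convergent multipliers. By continuity of $A_{V,\alpha}$ in operator norm, and since the set of invertible operators of the form $I+{}$compact is open, the limit minimum would be degenerate, which is a contradiction.

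The main obstacle I expect is the openness part: a degenerate minimum $V_n$ for $V_0^{(n)}$ has $A_{V_n,\alpha_n}$ noninvertible, and one must pass to the limit in a noninvertibility statement. I would handle this through the Fredholm index-zero structure: pick unit kernel vectors $h_n$, and use compactness of $K_k$ to extract a nonzero limit kernel vector. The $V$-dependence of the tangent spaces $T_{V_n}M$ is a complication here, and I would address it by using the bordered operator $D\Phi$ instead.
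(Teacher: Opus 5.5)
Your proposal is correct and follows essentially the paper's route. Your incidence manifold $Z=\widetilde\Phi^{-1}(0)$ is just the graph of the normal map $\mathcal E(V,\alpha)=V+B_V\alpha$ on $M\times\mathbb R^m$, with $\pi$ corresponding to $\mathcal E$, so the Sard--Smale step, the block-triangular equivalence between regularity and Morse nondegeneracy, the compactness argument for finiteness, and the sequential openness argument with converging multipliers all coincide with the paper's. The only real difference is that you work with the bordered KKT operator $D\Phi(V,\alpha)$ on the fixed space $H\times\mathbb R^m$ rather than with $A_{V,\alpha}$ in local tangent coordinates, which makes the limit passage in the openness step cleaner than the paper's appeal to convergence ``in local coordinates'': noninvertible operators form a norm-closed set, and $D\Phi$ varies continuously in operator norm.
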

\begin{proof}
1. Locally, $M$ is the graph $V+h+\Psi(h)$ over $T_VM$, where $\Psi(0)=D\Psi(0)=0$. The derivative at zero of the gradient of the restriction of $J_2$ equals $A_{V,\alpha}$. The inverse function theorem shows that the stationary point is isolated. At a minimum, Definition~\ref{def:nondeg} and Taylor's formula give
$J_2(V+h+\Psi(h))\ge J_2(V)+\gamma\|h\|_2^2/4$ for sufficiently small $h$. Since $\|h+\Psi(h)\|_2\le2\|h\|_2$ in a sufficiently small neighborhood, the required estimate in the ambient norm follows.

2. The set of global minima is compact by Lemma~\ref{lem:compact_sol_set}. If there were infinitely many isolated minima, they would have a limit point in this set, contradicting part~1.

3. Consider the analytic map
\begin{equation}\label{eq:normal_bundle_map}
 \mathcal E:M\times\mathbb R^m\to H,\qquad
 \mathcal E(V,\alpha)=V+B_V\alpha,
 \quad B_V\alpha=\sum_k\alpha_k\varphi_{i_k}(V)^2.
\end{equation}
The stationary pairs for the center $V_0$ form $\mathcal E^{-1}(V_0)$. For $h\in T_VM$ and $\beta\in\mathbb R^m$,
\[
 D\mathcal E(V,\alpha)[h,\beta]=H_{V,\alpha}h+B_V\beta.
\]
The operator $(h,\beta)\mapsto h+B_V\beta$ is an isomorphism from $T_VM\times\mathbb R^m$ onto $H$, and its difference from $D\mathcal E$ is compact. Thus $\mathcal E$ is a Fredholm map of index zero. Relative to the decomposition $H=T_VM\oplus(T_VM)^\perp$, its derivative has triangular form with diagonal blocks $A_{V,\alpha}$ and $B_V:\mathbb R^m\to(T_VM)^\perp$. Hence regularity of the pair is equivalent to Morse nondegeneracy.
As a separable Hilbert manifold, $M\times\mathbb R^m$ has a countable base. By the Sard--Smale theorem~\cite[Theorem~(1.3)]{Smale1965}, the regular values of $\mathcal E$ form a residual subset of $H$. No global openness of this set is asserted here.

We separately prove that the set $\mathcal O$ of centers with nondegenerate global minima is open. Let $X_n\to X$ in $H$ and $V_n\in\Pi_M(X_n)$. The Lipschitz property of the distance function gives
\[
 \big|\|V_n-X\|_2-d_M(X)\big|\le2\|X_n-X\|_2\to0.
\]
Boundedness, weak closedness of $M$, and the Radon--Riesz property allow us to extract a subsequence such that $V_n\to V$ in norm, where $V\in\Pi_M(X)$. The multipliers converge by
\begin{equation}\label{eq:multiplier_gram_moving}
 \alpha_n=-G(V_n)^{-1}
 \big(\langle V_n-X_n,\varphi_{i_j}(V_n)^2\rangle\big)_{j=1}^m.
\end{equation}
Here $G$ is the Gram matrix of the squared eigenfunctions. In local coordinates for the tangent spaces, $A_{V_n,\alpha_n}\to A_{V,\alpha}$. If $X\in\mathcal O$, the limiting operator is invertible, and so are all sufficiently close operators. Therefore a sequence of centers outside $\mathcal O$ cannot converge to $X\in\mathcal O$, proving openness. Density of $\mathcal O$ follows from residuality of the regular values of $\mathcal E$ and the Baire theorem.
\end{proof}

\begin{remark}
Restricting the second-order theory to $p=2$ is essential for this proof. For $1<p<2$, the functional $J_p$ is not $C^2$ on all of $L^p$. For $p>2$, its Hessian at a stationary point has the form $(p-1)\int|V-V_0|^{p-2}h^2$ with a bounded weight and is not coercive with respect to $\|h\|_p^2$. Functions of unit $L^p$-norm concentrated on small intervals have $L^2$-norm tending to zero; subtracting their finite-dimensional normal component preserves this property while enforcing the tangent constraints. Isolation conditions for $p\ne2$ require a separate analysis.
\end{remark}

	\subsection{Uniqueness and generic uniqueness of OISP solutions}
	
	\begin{theorem}[Global uniqueness for the first eigenvalue when $\lambda_1^*\ge\lambda_1(V_0)$]\label{thm:uniq_lambda1}
		Let $1<p<\infty$, $m=1$, and $i_1=1$. Let $V_0\in L^p(0,1)$ be an arbitrary reference potential and let the target first eigenvalue $\lambda_1^*\in\mathbb R$ satisfy $\lambda_1^*\ge\lambda_1(V_0)$. Then the optimization inverse spectral problem $\mathcal P_p(1)$
		\begin{equation}\label{eq:oisp_m1}
			\frac{1}{p}\|V - V_0\|_{L^p(0,1)}^p \longrightarrow \min, \quad \lambda_1(V) = \lambda_1^*, \quad V \in L^p(0,1),
		\end{equation}
		has a \textbf{unique global solution} $\hat V\in L^p(0,1)$. If $\lambda_1^*>\lambda_1(V_0)$, then this solution has the form
		\begin{equation}\label{eq:oisp_m1_rep}
			\hat{V}(x) = V_0(x) + \big( |\alpha_1| \varphi_1^2(\hat{V})(x) \big)^{\frac{1}{p-1}} = V_0(x) + \hat{u}_1^{\frac{2}{p-1}}(x),
		\end{equation}
		where $\hat u_1(x)>0$ on $(0,1)$ is the unique positive solution of the defocusing nonlinear boundary value problem
		\begin{equation}\label{eq:oisp_m1_nls}
			-\hat{u}_1''(x) + \left( V_0(x) + \hat{u}_1^{\frac{2}{p-1}}(x) \right) \hat{u}_1(x) = \lambda_1^* \hat{u}_1(x), \quad \hat{u}_1(0) = \hat{u}_1(1) = 0.
		\end{equation}
	\end{theorem}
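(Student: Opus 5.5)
The plan is to prove uniqueness by convexity. The key fact is that $V\mapsto\lambda_1(V)$ is concave on $L^p(0,1)$, because it is an infimum of affine functionals:
\[
\lambda_1(V)=\inf_{\|u\|_{L^2}=1}\Bigl(\int_0^1|u'|^2\,dx+\int_0^1 V u^2\,dx\Bigr).
\]
Hence the superlevel set $K:=\{V\in L^p(0,1):\lambda_1(V)\ge\lambda_1^*\}$ is convex. It is also closed, since $\lambda_1$ is continuous by Lemma~\ref{lem:frechet_indep_lp}.

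If $\lambda_1^*=\lambda_1(V_0)$, then $V_0\in\ISM_1^p$ and $J_p(V_0)=0$. Any minimizer must then satisfy $\|\hat V-V_0\|_{L^p}=0$, so $V_0$ is the unique solution.

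Now assume $\lambda_1^*>\lambda_1(V_0)$, and consider the auxiliary problem of minimizing $J_p$ over $K$. This is the problem of minimizing a strictly convex, coercive functional over a closed convex set in the reflexive space $L^p$, so it has exactly one minimizer $W$. I claim that $\lambda_1(W)=\lambda_1^*$. Suppose instead that $\lambda_1(W)>\lambda_1^*$. The point $V_0$ lies outside $K$, so $W\ne V_0$. By continuity of $\lambda_1$ along the segment $V_0+t(W-V_0)$, some $t<1$ still satisfies $\lambda_1\ge\lambda_1^*$, and that point has strictly smaller $J_p$ than $W$, a contradiction. Therefore $W\in\ISM_1^p(\lambda_1^*)\subset K$. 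Every global minimizer of $\mathcal P_p(1)$ over $\ISM_1^p$ is then also a minimizer over $K$: since $\ISM_1^p\subset K$, $\inf_{\ISM}J_p\ge\inf_K J_p=J_p(W)$, and $W\in\ISM$ gives equality. By uniqueness on $K$, every such minimizer equals $W$, which proves global uniqueness. Existence was already given by Theorem~\ref{thm:main_lp}.

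For the representation, apply Theorem~\ref{thm:main_lp} with $m=1$ to get $(\hat V-V_0)^{\langle p-1\rangle}=-\alpha_1\varphi_1^2$ with $\alpha_1\ne0$. The sign is fixed by the direction of the move: $\hat V-V_0\ge0$ is needed to raise $\lambda_1$, so $\alpha_1<0$. To make this rigorous, use the convex-problem KKT condition. Since $\hat V$ is the projection of $V_0$ onto the convex set $K$, the dual element $-(\hat V-V_0)^{\langle p-1\rangle}$ lies in the normal cone of $K$ at $\hat V$. That normal cone is generated by $-D\lambda_1(\hat V)=-\varphi_1^2$. Hence $(\hat V-V_0)^{\langle p-1\rangle}=\mu\varphi_1^2$ with $\mu\ge0$, and $\mu\ne0$ because $\hat V\ne V_0$. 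This gives $\alpha_1=-\mu<0$. Alternatively, the sign follows directly: if $\alpha_1>0$, then $\hat V\le V_0$, so by monotonicity of $\lambda_1$ in $V$ we get $\lambda_1^*\le\lambda_1(V_0)$, a contradiction. Taking the inverse signed power gives \eqref{eq:oisp_m1_rep}. Then set $\hat u_1=|\alpha_1|^{1/2}\varphi_1$, which is positive on $(0,1)$ because the ground state has no interior zeros; this yields $\hat u_1^{2/(p-1)}=(|\alpha_1|\varphi_1^2)^{1/(p-1)}$ and the equation \eqref{eq:oisp_m1_nls}, as in Theorem~\ref{thm:main_lp}(4).

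The step I expect to be the main obstacle is uniqueness of the positive solution of \eqref{eq:oisp_m1_nls}. I would handle it by the correspondence between positive solutions and minimizers. Any positive solution $u$ defines $V=V_0+u^{2/(p-1)}$, for which $u$ is a positive eigenfunction. So $\lambda_1(V)=\lambda_1^*$ and $(V-V_0)^{p-1}=u^2\propto\varphi_1^2(V)$. This is exactly the KKT condition for the convex problem on $K$, with a nonnegative multiplier. For a convex problem, KKT is sufficient, so $V=\hat V$, and hence $u=\hat u_1$. As a backup, one can use the Brezis--Oswald argument, since $s\mapsto s^{2/(p-1)}$ is strictly increasing.
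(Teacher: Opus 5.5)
Your proposal is correct and follows essentially the same route as the paper: concavity of $\lambda_1$ makes the superlevel set convex, and its unique nearest point to $V_0$ is pushed onto the level set $\lambda_1=\lambda_1^*$ by the segment argument. The sign of $\alpha_1$ then follows from monotonicity of $\lambda_1$, and uniqueness of the positive solution follows from sufficiency of the convex KKT condition, as in the paper's Steps 1--5; your separate treatment of $\lambda_1^*=\lambda_1(V_0)$ and the normal-cone justification of the sign are harmless additions.
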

	
	\begin{proof}
		1. Introduce the superlevel set of the first eigenvalue in $L^p(0,1)$ \cite{Sadovnichii2022DE,Sadovnichii2026R3}:
		\begin{equation}\label{eq:superlevel_set}
			\mathcal{K}_{\lambda_1^*}^p := \left\{ V \in L^p(0,1) : \lambda_1(V) \ge \lambda_1^* \right\}.
		\end{equation}
		By the Rayleigh--Ritz variational principle, the first eigenvalue is the infimum of a family of functionals affine in $V$:
		\begin{equation}\label{eq:rayleigh_concave}
			\lambda_1(V) = \inf_{\substack{u \in W_0^{1,2}(0,1) \\ \|u\|_{L^2}=1}} \left( \int_0^1 |u'(x)|^2 \, dx + \int_0^1 V(x)|u(x)|^2 \, dx \right).
		\end{equation}
		Since the infimum of any family of affine functionals is concave, the map $V\mapsto\lambda_1(V)$ is concave on $L^p(0,1)$. Thus, for all $V_1,V_2\in L^p(0,1)$ and $\theta\in[0,1]$,
		\[
		\lambda_1(\theta V_1 + (1-\theta)V_2) \ge \theta \lambda_1(V_1) + (1-\theta)\lambda_1(V_2).
		\]
		Concavity and continuity of $\lambda_1(\cdot)$ imply that $\mathcal K_{\lambda_1^*}^p$ is a \textit{closed convex} subset of $L^p(0,1)$ \cite{Sadovnichii2026R3}.
		
		2. For $1<p<\infty$, the Banach space $L^p(0,1)$ is reflexive and strictly convex. Therefore the nonempty closed convex set $\mathcal K_{\lambda_1^*}^p$ contains a unique element $\hat V$ minimizing $\|V-V_0\|_{L^p}$ over $\mathcal K_{\lambda_1^*}^p$.
		
		3. We show that, when $\lambda_1^*\ge\lambda_1(V_0)$, this unique element $\hat V$ lies on the boundary $\partial\mathcal K_{\lambda_1^*}^p=\ISM_1^p(\lambda_1^*)$, that is, $\lambda_1(\hat V)=\lambda_1^*$. If $\lambda_1(\hat V)>\lambda_1^*\ge\lambda_1(V_0)$, continuity of $t\mapsto\lambda_1((1-t)V_0+t\hat V)$ on $[0,1]$ would give $t_0\in[0,1)$ such that $\lambda_1((1-t_0)V_0+t_0\hat V)=\lambda_1^*$. Then $V_{t_0}:=(1-t_0)V_0+t_0\hat V$ would belong to $\ISM_1^p(\lambda_1^*)\subset\mathcal K_{\lambda_1^*}^p$ but would be strictly closer to $V_0$:
		\[
		\|V_{t_0} - V_0\|_{L^p} = t_0 \|\hat{V} - V_0\|_{L^p} < \|\hat{V} - V_0\|_{L^p},
		\]
		contradicting the minimality of $\hat V$. Since $\ISM_1^p(\lambda_1^*)\subset\mathcal K_{\lambda_1^*}^p$ and $\hat V\in\ISM_1^p(\lambda_1^*)$, the point $\hat V$ is the unique global solution of the OISP \eqref{eq:oisp_m1}.
		
		4. If $\lambda_1^*>\lambda_1(V_0)$, representation \eqref{eq:lp_rep} gives $\hat V(x)=V_0(x)-\sgn(\alpha_1)(|\alpha_1|\varphi_1^2(\hat V)(x))^{\frac1{p-1}}$. Since $\varphi_1(\hat V)(x)>0$ on $(0,1)$, the case $\alpha_1\ge0$ would imply $\hat V(x)\le V_0(x)$ and, by monotonicity of the first eigenvalue, $\lambda_1(\hat V)\le\lambda_1(V_0)<\lambda_1^*$, a contradiction. Hence $\alpha_1<0$, so $\sigma_1=-1$, and \eqref{eq:oisp_m1_rep}--\eqref{eq:oisp_m1_nls} follow.
	
        5. We also prove uniqueness among all positive solutions of \eqref{eq:oisp_m1_nls}. If $u>0$ is such a solution and $V=V_0+u^{2/(p-1)}$, then $u$ is the first eigenfunction of the operator with potential $V$, so $\lambda_1(V)=\lambda_1^*$. Moreover, $DJ_p(V)=u^2=\|u\|_2^2\varphi_1(V)^2$. For $W\in\mathcal K_{\lambda_1^*}^p$, concavity of the first eigenvalue gives
\[
 0\le\lambda_1(W)-\lambda_1^*\le\int\varphi_1(V)^2(W-V).
\]
Consequently, $DJ_p(V)[W-V]\ge0$. Convexity of $J_p$ makes $V$ a minimizer on $\mathcal K_{\lambda_1^*}^p$, and strict convexity makes it unique. Then $u=(V-V_0)^{(p-1)/2}$ is uniquely determined.
        \end{proof}
	
	\begin{theorem}[Uniqueness near the isospectral manifold for $p=2$]\label{thm:tubular_uniq}
Let $\Lambda^*\in\Omega_m$. There is an open neighborhood $\mathcal U\supset M=\ISM_m^2(\Lambda^*)$ in $L^2(0,1)$ such that, for every center $V_0\in\mathcal U$, the OISP has a unique global solution depending smoothly on $V_0$.
\end{theorem}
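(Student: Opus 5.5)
The plan is the standard tubular-neighborhood argument for a smooth closed submanifold of a Hilbert space, adapted to the fact that $M$ is noncompact and infinite-dimensional.

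We use the normal bundle map $\mathcal E(V,\alpha)=V+B_V\alpha$ from \eqref{eq:normal_bundle_map}. At $\alpha=0$, the Hessian is $H_{V,0}=I$. So $D\mathcal E(V,0)[h,\beta]=h+B_V\beta$, which is an isomorphism $T_VM\times\mathbb R^m\to H$ by Lemma~\ref{lem:frechet_indep_lp} (linear independence of the squares). By the inverse function theorem, for every $V\in M$ there are $r(V)>0$ and $\rho(V)>0$ such that $\mathcal E$ is a diffeomorphism from $\{(W,\alpha): W\in M,\ \|W-V\|<r(V),\ |\alpha|<\rho(V)\}$ onto an open set $\mathcal U_V\ni V$. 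On this set $A_{W,\alpha}$ is close to the identity on $T_WM$, hence positive definite; these are strict, strongly nondegenerate local minima in the sense of Definition~\ref{def:nondeg}.

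We then define $\mathcal U$ as the set of centers $V_0$ for which $d_M(V_0)$ is small relative to the local size of $M$ at every near-nearest point. Concretely, $V_0\in\mathcal U$ iff $V_0\in\mathcal U_V$ for some $V\in M$ with $\|V_0-V\|<\tfrac14\min(r(V),\rho'(V))$. Here $\rho'(V)$ is chosen so that $|\alpha|<\rho(V)$ whenever $\|B_V\alpha\|<\rho'(V)$; this uses the Gram bound $|\alpha|\le\|G(V)^{-1}\|^{1/2}\|B_V\alpha\|$. This $\mathcal U$ is open and contains $M$.

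For $V_0\in\mathcal U$, Theorem~\ref{thm:main_lp} gives a global minimizer $\hat V$. Every global minimizer satisfies $\|\hat V-V_0\|\le\|V-V_0\|$, so $\|\hat V-V\|\le 2\|V_0-V\|<r(V)/2$. Its multiplier satisfies $\|B_{\hat V}\alpha\|=\|\hat V-V_0\|$, which is small. The uniform control needed here is that $G(\hat V)^{-1}$ and the local inverse-function radii are comparable at $V$ and at nearby $\hat V$. This follows from continuity of $V\mapsto\varphi_{i_k}(V)^2$ (Lemma~\ref{lem:weak_cont_lp} together with analyticity), after shrinking $r(V)$ so that these quantities vary by at most a factor of $2$ on the $r(V)$-ball. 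Thus every global minimizer, written as the pair $(\hat V,\alpha)$, lies in the chart domain where $\mathcal E$ is injective, and $\mathcal E(\hat V,\alpha)=V_0$. Injectivity forces uniqueness. Smooth dependence follows because $V_0\mapsto \mathrm{pr}_M\,\mathcal E^{-1}(V_0)$ is a composition of smooth (indeed analytic) maps. As a check, different charts $\mathcal U_V,\mathcal U_{V'}$ containing $V_0$ give the same point, since each contains all global minimizers.

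The main obstacle is uniformity. $M$ is not compact, so $r(V),\rho(V)$ may degenerate as $V$ moves along $M$; this is why $\mathcal U$ is defined pointwise with variable radii rather than as a uniform $\varepsilon$-tube. The delicate point is showing that a global minimizer cannot lie far away along $M$, outside the chart of $V$. The triangle inequality handles this: $\|\hat V-V\|\le2d(V_0,V)$. What remains is the lower semicontinuity or local comparability of $r(\cdot)$ and $\|G(\cdot)^{-1}\|$. To obtain it, I would first try the quantitative inverse function theorem, with radii controlled by the $C^2$-norm of $\mathcal E$ on a ball and by $\|D\mathcal E(V,0)^{-1}\|$, both of which are locally bounded by analyticity.
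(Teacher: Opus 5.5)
Your proposal is correct and follows essentially the same route as the paper. Both use the inverse function theorem for $\mathcal E$ at $(V,0)$, where $D\mathcal E(V,0)[h,\beta]=h+B_V\beta$ is an isomorphism. Both bound $\|\hat V-V\|\le 2\|V_0-V\|$ by the triangle inequality, use the Gram matrix to force the multipliers of all global minimizers to be small, conclude uniqueness and smoothness from injectivity of the single local chart, and take the union over $V\in M$.

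The uniformity issue you leave open at the end does not need the quantitative inverse function theorem. All estimates take place in the one chart centered at the base point $V$. So you never need to compare inverse-function radii at different points of $M$. The only input required is local boundedness of $\|G(\cdot)^{-1}\|$ near $V$, which follows from continuity of the Gram matrix, and your shrinking of $r(V)$ already provides it.
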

\begin{proof}
At $(V_*,0)$, the derivative of the map in \eqref{eq:normal_bundle_map} is $(h,\beta)\mapsto h+B_{V_*}\beta$, which is an isomorphism. The inverse function theorem gives a unique local smooth stationary branch for centers near $V_*$. We show that every global minimum lies on this branch. For any $\widehat V\in\Pi_M(V_0)$, admissibility of $V_*$ gives
\[
 \|\widehat V-V_*\|_2\le\|\widehat V-V_0\|_2+\|V_0-V_*\|_2
 \le2\|V_0-V_*\|_2.
\]
Hence, as $V_0\to V_*$, all global minima converge to $V_*$. Formula~\eqref{eq:multiplier_gram_moving} shows that their multipliers tend to zero. Thus, for every sufficiently close center, each minimizing pair lies in the same invertibility neighborhood of $\mathcal E$. Existence of a minimum has already been established, so it is unique and coincides with the smooth local branch. Taking the union of these open neighborhoods over $V_*\in M$ gives $\mathcal U$.
\end{proof}

\begin{theorem}[Generic global uniqueness in the sense of Baire category]\label{thm:generic_uniq_baire}
		Let $1<p<\infty$, let $m\ge1$ be arbitrary, let $1\le i_1<\dots<i_m$, and let $\Lambda^*\in\Omega_m$. Denote by $\mathcal U_{\mathrm{uniq}}(\Lambda^*)$ the set of all potentials $V_0\in L^p(0,1)$ for which the global solution of the OISP $\mathcal P_p(m)$ in \eqref{eq:oisp_dist} is \textbf{unique}. Then $\mathcal U_{\mathrm{uniq}}(\Lambda^*)$ is a dense $G_\delta$ subset of $L^p(0,1)$ and, in particular, is residual. Equivalently, the OISP solution is globally unique for a generic reference potential $V_0\in L^p(0,1)$.
	\end{theorem}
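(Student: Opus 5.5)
The argument follows the classical nearest-point genericity scheme (Stechkin, Edelstein, Zhivkov) for the nonempty closed set $M_p:=\ISM_m^p(\Lambda^*)$ in the uniformly convex space $L^p(0,1)$. The set $M_p$ is closed, even weakly sequentially closed, by Lemma~\ref{lem:weak_cont_lp}, and every center has at least one nearest point by Theorem~\ref{thm:main_lp}. Write $d(V_0)=\dist_{L^p}(V_0,M_p)$ and $\Pi(V_0)$ for the set of nearest points. The proof has two parts: the $G_\delta$ property and density.

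\emph{$G_\delta$ property.} For $n\ge1$ let
\[
\mathcal U_n:=\{V_0\in L^p(0,1):\ \exists\,\delta>0 \text{ with } \diam\{V\in M_p:\ \|V-V_0\|_{L^p}<d(V_0)+\delta\}<1/n\}.
\]
Each $\mathcal U_n$ is open. Indeed, $d$ is $1$-Lipschitz, so if $\|W_0-V_0\|<\delta/3$, then the $\delta/3$-near-optimal set of $W_0$ is contained in the $\delta$-near-optimal set of $V_0$.

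Next we show $\bigcap_n\mathcal U_n=\mathcal U_{\mathrm{uniq}}(\Lambda^*)$. The inclusion "$\subseteq$" holds because any two nearest points lie in every near-optimal set. For "$\supseteq$", suppose the nearest point $\hat V$ is unique. Take a sequence $V_j\in M_p$ with $\|V_j-V_0\|\to d(V_0)$. The compactness argument of Lemma~\ref{lem:compact_sol_set} (weak limit, weak closedness of $M_p$, then Radon--Riesz) shows that every subsequence has a further subsequence converging strongly to a nearest point, which must be $\hat V$. Hence the diameters of the near-optimal sets tend to $0$ as $\delta\to0$, so $V_0\in\mathcal U_n$ for every $n$. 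This proves that $\mathcal U_{\mathrm{uniq}}(\Lambda^*)$ is $G_\delta$.

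\emph{Density.} Fix $V_0$ and $\varepsilon>0$. If $V_0\in M_p$, uniqueness is trivial, so assume $d(V_0)>0$. Pick $\hat V\in\Pi(V_0)$ and set $V_t:=V_0+t(\hat V-V_0)$ for $t\in(0,1)$. By the triangle inequality, $d(V_t)\le\|V_t-\hat V\|=(1-t)d(V_0)$. Suppose $W\in\Pi(V_t)$. Then
\[
\|W-V_0\|\le\|W-V_t\|+\|V_t-V_0\|\le(1-t)d(V_0)+t\,d(V_0)=d(V_0),
\]
so $W\in\Pi(V_0)$ and every inequality in the chain is an equality. In particular, $\|(W-V_t)+(V_t-V_0)\|=\|W-V_t\|+\|V_t-V_0\|$ with both summands nonzero. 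Strict convexity of $L^p$ then forces $W-V_t$ to be a positive multiple of $V_t-V_0$, i.e. of $\hat V-V_0$, and the norm equality gives $W=\hat V$. Thus $V_t\in\mathcal U_{\mathrm{uniq}}(\Lambda^*)$, and $V_t\to V_0$ as $t\to0$. So the set is dense, and Baire's theorem in the complete space $L^p(0,1)$ gives residuality.

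The main obstacle is the "$\supseteq$" inclusion in the $G_\delta$ step. It needs strong convergence of minimizing sequences to the unique nearest point, which is exactly where weak closedness of $M_p$ (Lemma~\ref{lem:weak_cont_lp}) and the Radon--Riesz property of uniformly convex $L^p$ enter. The density step uses only strict convexity.
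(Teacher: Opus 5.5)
Your proof is correct. The density step, the segment $V_t=V_0+t(\hat V-V_0)$ combined with strict convexity, is exactly the paper's. The $G_\delta$ step takes a genuinely different route.

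\textbf{The paper's route.} It works with exact nearest-point sets. It shows that each
\[
F_k=\{V_0:\ \diam\Pi_M(V_0)\ge 1/k\}
\]
is closed. To do so it takes pairs of nearest points for centers $V_{0,n}\to V_0$, extracts weak limits, and checks that these limits are nearest points of $V_0$. It then upgrades to strong convergence by Radon--Riesz, so that the separation $\ge 1/k$ survives in the limit. Closedness of $F_k$ together with density of its complement makes each $F_k$ nowhere dense.

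\textbf{Your route.} You use Stechkin-type sets $\mathcal U_n$ defined through near-optimal sets. Their openness is purely metric: it needs only the $1$-Lipschitz property of $d$. All the problem-specific compactness work is moved into the identification $\bigcap_n\mathcal U_n=\mathcal U_{\mathrm{uniq}}(\Lambda^*)$, that is, into showing that uniqueness forces every minimizing sequence to converge in norm to $\hat V$.

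\textbf{Comparison.} Both routes use the same ingredients: weak sequential closedness of the isospectral set (Lemma~\ref{lem:weak_cont_lp}), the Radon--Riesz property, and strict convexity.

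Your version yields a slightly stronger conclusion. For generic $V_0$ the OISP is well posed in Tikhonov's sense: every minimizing sequence of admissible potentials converges strongly to the unique minimizer. This is relevant to the numerical section, since it covers minimizing sequences produced by an iterative scheme.

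The paper's version instead extracts an upper-semicontinuity property of the set-valued projection $V_0\mapsto\Pi_M(V_0)$. Nearest points of nearby centers accumulate, in norm, only at nearest points of the limit center. The same mechanism is reused in the openness argument of Theorem~\ref{thm:isolated_finite}.
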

	
	\begin{proof}
		1. For $1<p<\infty$, the Banach space $X=L^p(0,1)$ is reflexive, uniformly convex, and uniformly smooth. It has the Radon--Riesz property: if $x_n\rightharpoonup x$ weakly in $X$ and $\|x_n\|_{L^p}\to\|x\|_{L^p}$, then $\|x_n-x\|_{L^p}\to0$.
		
		2. By Lemma~\ref{lem:weak_cont_lp}, the isospectral set $M:=\ISM_m^p(\Lambda^*)\subset L^p(0,1)$ is weakly sequentially closed and hence norm closed. By Theorem~\ref{thm:main_lp} and Lemma~\ref{lem:compact_sol_set}, for every $V_0\in X$ the metric projection
		\begin{equation}\label{eq:metric_proj_def}
			\begin{aligned}
 \Pi_M(V_0)&:=\mathcal S_{\min}(V_0;\Lambda^*)
 =\{\hat V\in M:\ \|\hat V-V_0\|_p=d_M(V_0)\},\\
 d_M(V_0)&:=\inf_{V\in M}\|V-V_0\|_p.
 \end{aligned}
		\end{equation}
		is nonempty and compact.
		
		3. For every integer $k\ge1$, define the set of reference potentials whose metric projection has diameter at least $1/k$:
		\begin{equation}\label{eq:fk_set}
			F_k := \left\{ V_0 \in L^p(0,1) : \diam_{L^p}\big(\Pi_M(V_0)\big) \ge \frac{1}{k} \right\}, \quad \text{where } \diam_{L^p}(A) := \sup_{u, v \in A} \|u - v\|_{L^p}.
		\end{equation}
		The complement of the uniqueness set is the countable union
		\begin{equation}\label{eq:comp_uniq}
			L^p(0,1) \setminus \mathcal{U}_{\mathrm{uniq}}(\Lambda^*) = \bigcup_{k=1}^\infty F_k.
		\end{equation}
		
		4. We prove that every $F_k$ is closed in $L^p(0,1)$. Let $\{V_{0,n}\}_{n=1}^\infty\subset F_k$ and $\|V_{0,n}-V_0\|_{L^p}\to0$. Since $\Pi_M(V_{0,n})$ is compact, there exist $\hat V_n^{(1)},\hat V_n^{(2)}\in\Pi_M(V_{0,n})$ such that $\|\hat V_n^{(1)}-\hat V_n^{(2)}\|_{L^p}\ge\frac1k$. Continuity of the distance function gives $|d_M(V_{0,n})-d_M(V_0)|\le\|V_{0,n}-V_0\|_{L^p}\to0$; hence $\{\hat V_n^{(1)}\}$ and $\{\hat V_n^{(2)}\}$ are bounded in $L^p(0,1)$. Passing to subsequences, $\hat V_n^{(r)}\rightharpoonup\hat V_*^{(r)}$ weakly in $L^p(0,1)$ for $r=1,2$. By Lemma~\ref{lem:weak_cont_lp}, $\hat V_*^{(r)}\in M$, and weak lower semicontinuity of the norm gives
		\[
		d_M(V_0) \le \|\hat{V}_*^{(r)} - V_0\|_{L^p} \le \liminf_{n \to \infty} \|\hat{V}_n^{(r)} - V_{0,n}\|_{L^p} = \lim_{n \to \infty} d_M(V_{0,n}) = d_M(V_0).
		\]
		Thus $\|\hat V_*^{(r)}-V_0\|_{L^p}=d_M(V_0)$, so $\hat V_*^{(1)},\hat V_*^{(2)}\in\Pi_M(V_0)$. Moreover,
\[
 \left|\|\hat V_n^{(r)}-V_0\|_p-d_M(V_0)\right|\le2\|V_{0,n}-V_0\|_p\to0.
\]
The Radon--Riesz property therefore applies to the weakly convergent sequence $\hat V_n^{(r)}-V_0$ with fixed center and gives strong convergence $\hat V_n^{(r)}\to\hat V_*^{(r)}$. Passing to the limit in $\|\hat V_n^{(1)}-\hat V_n^{(2)}\|_{L^p}\ge\frac1k$, we obtain $\|\hat V_*^{(1)}-\hat V_*^{(2)}\|_{L^p}\ge\frac1k$. Hence $V_0\in F_k$, and $F_k$ is closed.
		
		5. We prove that every closed set $F_k$ is nowhere dense in $L^p(0,1)$. Let $V_0\in L^p(0,1)$ and $\varepsilon>0$ be arbitrary, and choose any $\hat V\in\Pi_M(V_0)$. If $V_0=\hat V\in M$, then $\Pi_M(V_0)=\{V_0\}$ and $V_0\notin F_k$. Suppose $V_0\notin M$, so $R:=\|V_0-\hat V\|_{L^p}>0$. Consider an interior point of the segment joining $V_0$ and $\hat V$:
		\begin{equation}\label{eq:steckin_ray}
			V_\theta := (1 - \theta) V_0 + \theta \hat{V}, \quad \theta \in (0, 1), \quad \|V_\theta - V_0\|_{L^p} = \theta R < \varepsilon \quad (\text{if } \theta < \varepsilon / R).
		\end{equation}
		We show that the OISP solution at $V_\theta$ is unique: $\Pi_M(V_\theta)=\{\hat V\}$. For every $V\in M$, the triangle inequality gives
		\[
		\|V - V_\theta\|_{L^p} \ge \|V - V_0\|_{L^p} - \|V_\theta - V_0\|_{L^p} \ge R - \theta R = (1 - \theta) R = \|\hat{V} - V_\theta\|_{L^p}.
		\]
		Thus $d_M(V_\theta)=(1-\theta)R$. If some $\widetilde V\in M$ belongs to $\Pi_M(V_\theta)$, then equality holds in
		\[
		\|\tilde{V} - V_0\|_{L^p} \le \|\tilde{V} - V_\theta\|_{L^p} + \|V_\theta - V_0\|_{L^p} = (1 - \theta)R + \theta R = R
		\]
		In the strictly convex space $L^p(0,1)$, $1<p<\infty$, equality in the triangle inequality for nonzero vectors is possible only when they are positively collinear. Hence $\widetilde V-V_\theta=\frac{1-\theta}{\theta}(V_\theta-V_0)=\hat V-V_\theta$, and $\widetilde V=\hat V$.
		Thus $\Pi_M(V_\theta)=\{\hat V\}$, so $V_\theta\in\mathcal U_{\mathrm{uniq}}(\Lambda^*)$ and $V_\theta\notin F_k$. Since $F_k$ is closed, it is nowhere dense. Therefore $\mathcal U_{\mathrm{uniq}}(\Lambda^*)=\bigcap_{k=1}^\infty(L^p(0,1)\setminus F_k)$ is a dense $G_\delta$ set.
	\end{proof}
	
	\section{Mechanisms of nonuniqueness of OISP solutions}

The generic uniqueness of Theorem~\ref{thm:generic_uniq_baire} allows exceptional centers with several global minima. Smallness in the sense of Baire category alone gives no information about the codimension or measure of the exceptional set.

\begin{example}[Reflection and the essential step in proving multiplicity]\label{ex:parity_breaking}
Consider the reflection
\begin{equation}\label{eq:reflection_op}
 (RV)(x)=V(1-x).
\end{equation}
It is an isometric involution in $L^p(0,1)$. Replacing an eigenfunction $u(x)$ by $u(1-x)$ gives
\begin{equation}\label{eq:reflection_spec}
 \lambda_j(RV)=\lambda_j(V),\qquad R\ISM_m^p(\Lambda^*)=\ISM_m^p(\Lambda^*).
\end{equation}
If $RV_0=V_0$, then
\begin{equation}\label{eq:reflection_norm}
 \|RV-V_0\|_p=\|V-V_0\|_p.
\end{equation}
Thus a nonsymmetric global minimum and its reflection are two distinct global minima. To use this observation, one must exclude \emph{all} symmetric global minima. Reflection alone, like equality of the functional values on two stationary branches, does not provide such an exclusion. The next two examples supply the required comparison for a barrier of finite height.
\end{example}

Set
\begin{equation}\label{eq:double_well_v0}
 J_L=(0,\tfrac14),\quad J_R=(\tfrac34,1),\quad
 C=(\tfrac14,\tfrac34),\qquad V_{0,B}=B\mathbf1_C,
 \qquad \mu=16\pi^2.
\end{equation}

\begin{lemma}[High-barrier limit under weak perturbations]\label{lem:barrier_limit}
Let $1<p<\infty$, $B_n\to\infty$, and $h_n\rightharpoonup h$ in $L^p(0,1)$. Then every fixed eigenvalue of the operator with potential $B_n\mathbf1_C+h_n$ converges to the corresponding eigenvalue of the direct sum of the two Dirichlet operators on $J_L$ and $J_R$ with potentials $h|_{J_L}$ and $h|_{J_R}$. The eigenvalues of the direct sum are counted with multiplicities.
\end{lemma}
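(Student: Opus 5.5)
We prove the claim by min--max on the quadratic form, with separate upper and lower bounds for each eigenvalue. Write $W_n=B_n\mathbf 1_C+h_n$. Let $\nu_1\le\nu_2\le\dots$ be the eigenvalues, with multiplicity, of the direct sum $\mathcal L_L\oplus\mathcal L_R$. Here $\mathcal L_L,\mathcal L_R$ are the Dirichlet operators on $J_L,J_R$ with potential $h$, and their form domain is $Q_\infty=W_0^{1,2}(J_L)\oplus W_0^{1,2}(J_R)$. We regard $Q_\infty$ as the subspace of $W_0^{1,2}(0,1)$ consisting of functions vanishing on $\overline C$. Since $\|h_n\|_{L^p}\le K$, the estimate of Lemma~\ref{lem:frechet_indep_lp} (part~1) gives $|\int h_n u^2|\le K(\eta\|u'\|_2^2+\eta^{-1}\|u\|_2^2)$ uniformly in $n$. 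Hence all forms $a_{h_n}$ are uniformly bounded below and uniformly equivalent to $\|u\|_Q^2$ after a fixed shift $c$.

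\emph{Upper bound.} Take $E=\mathrm{span}$ of the first $k$ eigenfunctions of the direct sum. This $E$ lies in $Q_\infty$, where $B_n\mathbf 1_C u^2\equiv0$, so $a_{W_n}[u]=a_{h_n}[u]$ on $E$. On the fixed finite-dimensional space $E$, the unit sphere is compact in $Q$. The argument of Step~2 of Lemma~\ref{lem:weak_cont_lp} (weak convergence in $L^p$ against a compact family in $C[0,1]$) gives $\sup_{u\in E,\|u\|_2=1}|\int(h_n-h)u^2|\to0$. By Courant--Fischer this yields $\limsup_n\lambda_k(W_n)\le\nu_k$.

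\emph{Lower bound.} This is the main step. Let $u_1^n,\dots,u_k^n$ be orthonormal eigenfunctions of $\mathcal L[W_n]$ for $\lambda_1(W_n),\dots,\lambda_k(W_n)$. By the upper bound these eigenvalues are bounded above, and the uniform form bounds give $\|u_j^n\|_Q\le C$ and $B_n\int_C (u_j^n)^2\le C$. The same bounds hold uniformly for every normalized $u$ in the span $E_n$. Pass to a subsequence along which $\lambda_j(W_n)\to\ell_j$ and $u_j^n\to u_j$ weakly in $Q$ and strongly in $C[0,1]$, by the compact embedding. Then $\int_C u_j^2=0$, so $u_j\in Q_\infty$, and orthonormality passes to the limit. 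For a normalized element of the limit span, write it as the limit of the corresponding combinations with the same coefficients. Dropping the nonnegative barrier term, we use weak lower semicontinuity of $\int|u'|^2$ and the convergence $\int h_nu_n^2\to\int hu^2$. The latter holds because $u_n^2\to u^2$ in $C[0,1]$, hence in $L^q$, while $h_n\rightharpoonup h$, exactly as in Lemma~\ref{lem:weak_cont_lp}. This gives $a_h[u]\le\liminf a_{W_n}[u_n]\le\max_{j\le k}\ell_j$, uniformly over the sphere in the coefficients by compactness. Min--max on $Q_\infty$ then gives $\nu_k\le\ell_k$.

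Combining the two bounds, $\ell_k=\nu_k$ for every convergent subsequence, so the whole sequence $\lambda_k(W_n)$ converges to $\nu_k$. The obstacle I expect is handling the $\liminf$ uniformly over the $k$-dimensional spans, so that it really gives the $k$-th eigenvalue. I would settle this by treating coefficient vectors on the unit sphere of $\mathbb R^k$, which is compact, and using the uniform $Q$-bounds.
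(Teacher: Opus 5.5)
Your proposal is correct and follows essentially the same route as the paper: the upper bound comes from testing on the first $k$ limiting eigenfunctions extended by zero across $C$, and the lower bound from extracting uniform limits of the first $k$ orthonormal eigenfunctions, which vanish on $C$, and applying min--max on their span. The uniformity over the coefficient sphere that you flag as the main obstacle is not actually needed. For each fixed coefficient vector you already have $a_h[u]\le \ell_k\|u\|_2^2$, and that pointwise bound on the $k$-dimensional span is all the min--max principle requires.
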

\begin{proof}
The norms $\|h_n\|_p$ are bounded. For every $\eta>0$, the one-dimensional interpolation estimate gives, with a common constant,
\[
 \left|\int h_nu^2\right|\le\eta\|u'\|_2^2+C_\eta\|u\|_2^2.
\]
Hence the forms are uniformly bounded from below. An upper bound for the $k$th level is obtained by testing on the first $k$ eigenfunctions of the limiting direct sum, extended by zero on $C$. On their finite-dimensional span, the integrals involving $h_n$ converge uniformly and the barrier term vanishes. Therefore $\limsup\lambda_k\le\nu_k$, where $\nu_k$ is the $k$th limiting level.

Choose the first $k$ orthonormal eigenfunctions $u_{j,n}$ of the original operators. The upper bound for the levels and the lower bound for the forms imply boundedness of their $H_0^1$ norms and
$B_n\int_Cu_{j,n}^2\le C_k$. After extracting a common subsequence, the functions converge weakly in $H_0^1$ and uniformly on $[0,1]$ to orthonormal functions $u_j$ vanishing on $C$. Their restrictions belong to $H_0^1(J_L)\oplus H_0^1(J_R)$. For fixed coefficients $a_j$, set $u_n=\sum_{j=1}^ka_ju_{j,n}$ and $u=\sum_{j=1}^ka_ju_j$. Uniform convergence and weak convergence of $h_n$ give $\int h_nu_n^2\to\int hu^2$. Weak lower semicontinuity of the kinetic energy and nonnegativity of the barrier yield
\[
 a_\infty[u]\le\liminf a_n[u_n]
 =\liminf\sum_{j=1}^k\lambda_j(B_n\mathbf1_C+h_n)a_j^2
 \le(\liminf\lambda_k)\sum_{j=1}^ka_j^2,
\]
where the subsequence was initially chosen to realize the lower limit of the $k$th level. The min--max principle on the span of $u_1,\ldots,u_k$ gives $\nu_k\le\liminf\lambda_k$. The two estimates prove the assertion.
\end{proof}

\begin{example}[A double-well center: at least two minima for $m=1$]\label{ex:double_well}
Fix $1<p<\infty$ and set $i_1=1$, $\lambda_1^*=\mu-1$. For every sufficiently large \emph{finite} $B$, the center in \eqref{eq:double_well_v0} has at least two global OISP solutions.

To prove this, define on the single interval $J_L$
\[
 d=\min\{\|w\|_{L^p(J_L)}:\ \lambda_1^{J_L}(w)=\mu-1\}.
\]
The minimum exists by Theorem~\ref{thm:main_lp} after a linear change of variable; for example, $w=-1$ is admissible. Moreover, $d>0$, since otherwise continuity of the first level at zero would imply $\mu=\mu-1$. Let $w_*$ attain the minimum and let $\widetilde w_*$ be its extension by zero outside $J_L$. By Lemma~\ref{lem:barrier_limit},
$\lambda_1(V_{0,B}+\widetilde w_*)\to\mu-1$, because the first level of the right well equals $\mu$. Set
\[
 c_B=\mu-1-\lambda_1(V_{0,B}+\widetilde w_*),\qquad
 W_B=V_{0,B}+\widetilde w_*+c_B.
\]
A constant shift moves every level by the same amount. Hence $W_B$ is exactly admissible, $c_B\to0$, and $\|W_B-V_{0,B}\|_p^p\to d^p$.

If symmetric global minima $V_n=V_{0,B_n}+h_n$ existed for an unbounded sequence $B_n$, then $\limsup\|h_n\|_p^p\le d^p$. Extract a weak limit $h$, which is symmetric. By Lemma~\ref{lem:barrier_limit}, the first levels of the two identical limiting wells both equal $\mu-1$. By the definition of $d$ and weak lower semicontinuity,
\[
 \liminf\|h_n\|_p^p\ge\|h\|_p^p
 \ge\|h\|_{L^p(J_L)}^p+\|h\|_{L^p(J_R)}^p\ge2d^p,
\]
a contradiction. Thus, for some $B_0(p)$, every global minimum is nonsymmetric whenever $B\ge B_0(p)$. Existence and reflection give at least two distinct solutions. This proof determines neither the exact number of solutions nor the numerical value of $B_0(p)$.
\end{example}

\begin{example}[The same center: at least two minima for $m=2$]\label{ex:double_well_m2}
For fixed $1<p<\infty$, set $i_1=1$, $i_2=2$, and
\[
 \lambda_1^*=\mu-1,\qquad \lambda_2^*=\mu+1.
\]
For every sufficiently large finite $B$, the center in \eqref{eq:double_well_v0} again has at least two global solutions.

Set $s=-\mathbf1_{J_L}+\mathbf1_{J_R}$ and consider $U_{B,a}=V_{0,B}+as$, $a\in[1/2,3/2]$. The first two limiting levels are $\mu-a$ and $\mu+a$; the second levels of the individual wells lie above them. Convergence is uniform in $a$ on the indicated interval. Indeed, by the min--max principle the levels are Lipschitz functions of $a$ with Lipschitz constant one, so pointwise convergence from Lemma~\ref{lem:barrier_limit}, together with a finite grid in $a$, implies uniform convergence. By continuity of the gap, there is $a_B\in(1/2,3/2)$ such that
$\lambda_2(U_{B,a_B})-\lambda_1(U_{B,a_B})=2$; uniform convergence gives $a_B\to1$. Set
\[
 c_B=\mu-1-\lambda_1(U_{B,a_B}),\qquad W_B=U_{B,a_B}+c_B.
\]
Then both spectral equalities hold exactly, $c_B\to0$, and
$\|W_B-V_{0,B}\|_p^p\to\|s\|_p^p=1/2$.

Consequently, the perturbations of global minima remain bounded in $L^p$ as $B\to\infty$. If symmetric global minima existed along an unbounded sequence, their weak limit in Lemma~\ref{lem:barrier_limit} would define two identical wells. The first two levels of the limiting direct sum would then coincide, contrary to the fixed gap $2$. Hence, for every sufficiently large $B$, the global minima are nonsymmetric and occur in reflected pairs.
\end{example}

\begin{remark}[Medial set and stationary branches]\label{ex:caustics}
The set of centers having several nearest points is called the medial set, or the Maxwell set in the corresponding variational interpretation. It is determined by equality of \emph{global} minimal distances. Critical values of the normal map are associated with degeneracy of stationary points, which is a different condition. An intersection of stationary branches, or equality of their values without comparison with all admissible potentials, does not prove global multiplicity.
\end{remark}

\begin{example}[Nonuniqueness of Lagrange multipliers for periodic boundary conditions and graphs with cycles]\label{ex:lame_graph}
		For the Sturm--Liouville operator on an interval with separated boundary conditions, and on every cycle-free metric tree graph \cite{Valeev2026Graph}, the squared eigenfunctions $\varphi_{i_1}^2(\hat V),\dots,\varphi_{i_m}^2(\hat V)$ are always linearly independent by Lemma~\ref{lem:frechet_indep_lp}. Thus each solution $\hat V$ has a unique set of Lagrange multipliers $(\alpha_1,\dots,\alpha_m)\in\mathbb R^m$. Under periodic boundary conditions on a circle, or on a compact graph containing a cycle, their squares may be linearly dependent even when the first $m$ eigenvalues are simple \cite{Valeev2026Graph}. As shown in Section~5.1 of \cite{Valeev2026Graph}, on a circle of length $4K(\kappa)$, $0<\kappa<1$, the Lam\'e potential
		\begin{equation}\label{eq:lame_pot}
			V(x) = 2\kappa^2 \sn^2(x, \kappa)
		\end{equation}
		has the first three simple eigenvalues $\lambda_1=\kappa^2<\lambda_2=1<\lambda_3=1+\kappa^2$. The corresponding eigenfunctions are expressed in terms of Jacobi elliptic functions: $\psi_1(x)=\dn(x,\kappa)$, $\psi_2(x)=\cn(x,\kappa)$, and $\psi_3(x)=\sn(x,\kappa)$. The identities $\cn^2x+\sn^2x=1$ and $\dn^2x+\kappa^2\sn^2x=1$ yield the linear relation
		\begin{equation}\label{eq:lame_dep}
			\dn^2(x, \kappa) - \cn^2(x, \kappa) - (1 - \kappa^2)\sn^2(x, \kappa) \equiv 0.
		\end{equation}
		In full agreement with Theorem~\ref{thm:analytic_pot_dep}, for $c_1=1$, $c_2=-1$, and $c_3=-(1-\kappa^2)$ the first weighted sum in \eqref{eq:Sr_Tr_def} is the nonzero constant $S_1(x)=\kappa^2\dn^2x-\cn^2x-(1-\kappa^4)\sn^2x\equiv-(1-\kappa^2)\ne0$. The Lam\'e potential \eqref{eq:lame_pot} is expressed through squared eigenfunctions by \eqref{eq:V_via_Sr_boxed} and is a real-analytic periodic function. At this point, the Gram matrix of the first three squared eigenfunctions has rank two rather than three, and the Lagrange multipliers are nonunique up to the kernel of the Gram matrix or require a limiting regularized representation; see Theorem~6.1 in \cite{Valeev2026Graph}. This example concerns nonuniqueness of the multipliers for an existing representation and does not by itself prove nonuniqueness of the optimal potential.
	\end{example}
	
	\section{Numerical algorithms for the OISP}

We describe numerical schemes and the conditions for their local applicability. Spectral feasibility and a small stationarity residual do not by themselves certify global minimality; an additional comparison or an analytic certificate is required.

\begin{algorithm}[Frozen-eigenfunction iteration]\label{alg:proj_newton}
Representation~\eqref{eq:lp_rep} is implicit because the eigenfunctions depend on the unknown $V$. The following scheme therefore uses an approximate Jacobian rather than the exact differential of the full system.
\begin{enumerate}
\item On the grid $x_j=jh$, $j=0,\ldots,N+1$, $h=1/(N+1)$, represent the potential by its cell averages
$\overline V_j=h^{-1}\int_{x_j-h/2}^{x_j+h/2}V(x)\,dx$, $j=1,\ldots,N$. This is well defined for every equivalence class $V\in L^p$, independently of the choice of representative. Use the matrix
\begin{equation}\label{eq:jacobi_mat}
 A_h(V)=h^{-2}\tridiag(-1,2,-1)+\diag(\overline V_1,\ldots,\overline V_N).
\end{equation}
An alternative is the finite element method with integration of the potential matrix. On a finite grid, the restriction $i_m\le N$ and preservation of the rank of the discrete constraints must be checked separately.
\item Set $V^{(0)}=V_0$ and $\alpha^{(0)}=0$. At step $s$, compute the selected eigenpairs and normalize them by $h\sum_j\varphi_{i_k,j}^2=1$. Freeze the eigenfunctions at this step.
\item For $p=2$, introduce
\begin{equation}\label{eq:gram_p2}
 G^{(s)}_{kr}=h\sum_j(\varphi_{i_k,j}^{(s)})^2(\varphi_{i_r,j}^{(s)})^2.
\end{equation}
For $p\ne2$, set $r=1/(p-1)$. To define a smoothed scheme explicitly, use
\begin{equation}\label{eq:gram_lp}
 w_\varepsilon(t)=r(t^2+\varepsilon^2)^{(r-1)/2},\qquad
 \psi_\varepsilon(t)=\int_0^t w_\varepsilon(s)\,ds.
\end{equation}
As $\varepsilon\downarrow0$, $\psi_\varepsilon(t)\to t^{\langle r\rangle}$. With the eigenfunctions fixed, the approximate differential of the spectral residual with respect to $\alpha$ is $-G_{p,\varepsilon}^{(s)}$, where
\[
 (G_{p,\varepsilon}^{(s)})_{kr}
 =h\sum_jw_\varepsilon(S_j^{(s)})(\varphi_{i_k,j}^{(s)})^2
 (\varphi_{i_r,j}^{(s)})^2,\qquad
 S_j^{(s)}=\sum_l\alpha_l^{(s)}(\varphi_{i_l,j}^{(s)})^2.
\]
This approximation simultaneously freezes the dependence of the eigenfunctions on $V$. For finite $\varepsilon$, one solves a smoothed system; passage to $\varepsilon\to0$ requires a separate justification.
\item Solve $G_{p,\varepsilon}^{(s)}\Delta\alpha=F_m(V^{(s)})-\Lambda^*$ and choose a damping parameter $\tau_s\in(0,1]$. Then set
\begin{equation}\label{eq:pot_update_lp}
 \begin{aligned}
 \alpha^{(s+1)}&=\alpha^{(s)}+\tau_s\Delta\alpha,\\
 V_j^{(s+1)}&=(1-\tau_s)V_j^{(s)}+
 \tau_s\left[\overline V_{0,j}-\psi_\varepsilon\left(
 \sum_k\alpha_k^{(s+1)}(\varphi_{i_k,j}^{(s)})^2\right)\right].
 \end{aligned}
\end{equation}
For $p=2$, take $\psi_\varepsilon(t)=t$. Monitor the spectral and stationarity residuals separately. Convergence of this iteration is not asserted without additional assumptions.
\end{enumerate}
\end{algorithm}

For comparison, we write the exact implicit differential for $p=2$. Let
\begin{equation}\label{eq:alg_pot_struct}
 R(V,\alpha)=V-V_0+B_V\alpha=0,\qquad B_V\alpha=\sum_k\alpha_k\varphi_{i_k}(V)^2.
\end{equation}
Wherever $H_{V,\alpha}=D_VR$ is invertible, this equation locally determines $V(\alpha)$, and
\begin{equation}\label{eq:exact_schur}
 D_\alpha V=-H_{V,\alpha}^{-1}B_V,\qquad
 D_\alpha F_m(V(\alpha))=-B_V^*H_{V,\alpha}^{-1}B_V.
\end{equation}
The matrix $-G=-B_V^*B_V$ is obtained on the zero section $\alpha=0$ or under the freezing approximation above. Invertibility of the full KKT operator does not require invertibility of $H_{V,\alpha}$ on the whole space; hence the full system is preferable when $V$ cannot be eliminated.

\begin{algorithm}[Homotopy Newton--KKT method for $p=2$]\label{alg:homotopy_kkt}
\begin{enumerate}
\item Introduce the path
\begin{equation}\label{eq:homotopy_path}
 \Lambda^*(t)=(1-t)F_m(V_0)+t\Lambda^*,\qquad 0\le t\le1.
\end{equation}
At $t=0$, the exact pair $(V_0,0)$ is known. Theorem~\ref{thm:tubular_uniq} and invertibility of the KKT operator provide local continuation near the starting point. The branch can then be continued as long as the corresponding KKT operator remains invertible and the solution stays within the range of the local theory. Step-size reduction alone does not preclude degeneracy or a switch of the globally minimizing branch.
\item At each step, solve the residual system
\begin{equation}\label{eq:kkt_residuals}
 r_V=V-V_0+\sum_k\alpha_k\varphi_{i_k}(V)^2,
 \qquad r_\Lambda=F_m(V)-\Lambda^*(t).
\end{equation}
In the discrete inner product $\langle u,v\rangle_h=h\sum_ju_jv_j$, the Newton step is
\begin{equation}\label{eq:kkt_newton_sys}
 \begin{pmatrix}H(V,\alpha)&\Phi_2(V)\\h\Phi_2(V)^T&0\end{pmatrix}
 \begin{pmatrix}\Delta V\\\Delta\alpha\end{pmatrix}
 =-\begin{pmatrix}r_V\\r_\Lambda\end{pmatrix},
\end{equation}
where the columns of $\Phi_2$ are the squares of the normalized eigenvectors. Setting
$P_k=h\varphi_{i_k}\varphi_{i_k}^T$ and
$M_k=A_h(V)-\lambda_{i_k}(V)I+P_k$, we obtain
\begin{equation}\label{eq:kkt_hessian_discrete}
 H(V,\alpha)=I-2\sum_k\alpha_k\diag(\varphi_{i_k})
 (M_k^{-1}-P_k)\diag(\varphi_{i_k}).
\end{equation}
Here $M_k^{-1}-P_k$ is the reduced discrete resolvent. If the selected levels are simple, the constraints have full rank, and the stationary pair is nondegenerate, then the full Newton method converges locally and quadratically from a sufficiently close initial approximation. This local guarantee concerns the stationary pair; minimality must be checked separately.
\end{enumerate}
\end{algorithm}

	\section{Possible applications of OISP theory}
	
	Optimization inverse spectral problems have a broad range of applications both within fundamental mathematics, especially the theory of nonlinear differential equations, and in applied problems of mathematical modeling, acoustics, radiophysics, and vibration mechanics. 
	
\subsection{Nonlinear operators of mathematical physics} The OISP variational method establishes a duality between linear spectral theory and nonlinear differential equations \cite{ValeevIlyasov2018MZ,IlyasovValeev2021}:
	\begin{enumerate}
		\item \textbf{Existence of multisoliton and nodal states for nonvariational NLS systems.} In nonlinear optics and Bose--Einstein condensate physics, propagation of multimode beams in photorefractive crystals and waveguides is described by the system of $m$ incoherently coupled nonlinear Schr\"odinger equations \eqref{eq:p2_nls} \cite{IlyasovValeev2021}. With self-interaction coefficients of different signs, $\sigma_j=\pm1$, system \eqref{eq:p2_nls} does not possess the standard variational structure of an action functional. Reduction to the OISP yields a family of scaled components having $i_k-1$ nodes whenever $\alpha_k\ne0$. Nontriviality of the entire vector is guaranteed by $F_m(V_0)\ne\Lambda^*$; this method does not guarantee that every component is nonzero or that the signs $\sigma_k$ can be prescribed in advance.
		\item \textbf{Generation of solutions with power nonlinearities through an OISP in $L^p(0,1)$.} By Theorem~\ref{thm:main_lp}, a solution of $\mathcal P_p(m)$ in $L^p(0,1)$ yields solutions of the nonlinear system \eqref{eq:lp_nls_sys} with an arbitrary nonlinear exponent $r=\frac1{p-1}\in(0,\infty)$ applied to the quadratic form $\sum_{j=1}^m\sigma_j\hat u_j^2(x)$.
		\item \textbf{Uniqueness of solutions to nonlinear equations.} Convexity of the superlevel set $\mathcal K_{\lambda_1^*}^p$ in Theorem~\ref{thm:uniq_lambda1} provides a geometric proof of uniqueness of the positive solution to the defocusing equation \eqref{eq:oisp_m1_nls}; cf.~\cite{Sadovnichii2026R3,IlyasovSunValeevYao2026}.
	\end{enumerate}
	
	\subsection{Modeling, diagnostics, and spectral design of acoustic and electromagnetic resonators}
	
	In applied physics and engineering, the OISP admits two complementary interpretations \cite{Sadovnichii2026R3,ChuGolub2005,Gladwell2004}:
	\begin{itemize}
		\item \textbf{Identification and nondestructive diagnostics.} Given measured shifts of the first $m$ natural frequencies of a real resonator relative to a reference design $V_0(x)$, one seeks a defect or medium-inhomogeneity profile having the smallest norm of deviation from the reference. A probabilistic interpretation would require a separate statistical model.
		\item \textbf{Spectral design (optimal tuning).} One seeks a minimal modification of the baseline resonator profile $V_0(x)$ that tunes the first $m$ resonance frequencies to the prescribed values $\lambda_1^*,\dots,\lambda_m^*$.
	\end{itemize}
	Since $\varphi_{i_k}^2(x)$ is proportional to the local potential-energy density of the $i_k$th standing wave, formula \eqref{eq:lp_rep} in Theorem~\ref{thm:main_lp} shows that the optimal correction is determined by the joint spatial profile of the squared controlled modes and vanishes at their common nodes. The formula does not imply localization exclusively at individual antinodes.
	
	\paragraph{Acoustic waveguides of variable cross section (Webster's horn equation).}
	Longitudinal acoustic oscillations of a gas in an axisymmetric waveguide or horn with cross-sectional area $S(x)>0$, $x\in[0,1]$, are described by Webster's equation \cite{Eisner1967}:
	\begin{equation}\label{eq:webster_horn}
		-\frac{1}{S(x)} \frac{d}{dx} \left( S(x) \frac{dp(x)}{dx} \right) = \lambda p(x), \quad \lambda = \left(\frac{\omega}{c_0}\right)^2,
	\end{equation}
	where $p(x)$ is the complex amplitude of acoustic pressure, $\omega$ is the angular frequency, and $c_0$ is the speed of sound. The Liouville substitution $u(x)=\sqrt{S(x)}\,p(x)$ transforms Webster's equation \eqref{eq:webster_horn} into the canonical Sturm--Liouville equation \eqref{eq:sl_op} with potential
	\begin{equation}\label{eq:webster_pot}
		-u''(x) + V(x)u(x) = \lambda u(x), \quad V(x) := \frac{\big(\sqrt{S(x)}\big)''}{\sqrt{S(x)}}.
	\end{equation}
	Given the measured first $m$ shifted eigenfrequencies $\lambda_1^*,\dots,\lambda_m^*$, an OISP solution $\hat V(x)$ obtained from \eqref{eq:lp_rep} reconstructs the perturbed potential. A candidate cross section $\hat S=r^2$ can then be found from $r''-\hat Vr=0$ only after two additional data have been prescribed, for example $r(0)=r_0>0$ and $r'(0)=r_1$. Realizability requires $r(x)>0$ on the entire interval and compatibility of the physical boundary conditions with the spectral conditions used. These properties do not follow automatically from the OISP. For instance, when $\hat V=0$, different positive linear functions give different cross sections, whereas for $\hat V=-4\pi^2$ there is no strictly positive solution on all of $[0,1]$. Thus, in a design problem, geometric and positivity constraints must either be included in the formulation or checked separately.
	
	\paragraph{Engineering choice of the norm exponent $p\in(1,\infty)$ in design.}
	Extending OISP theory to $L^p(0,1)$, as in Theorem~\ref{thm:main_lp}, makes it possible to reflect technological considerations through the choice of $p$:
	\begin{enumerate}
		\item \textit{Small exponents $1<p<2$.} Smaller exponents penalize large correction values less strongly than larger exponents. This may motivate their use in modeling localized changes, but does not prove sparsity or concentration of the optimal profile. Concentrated elements require a separate model allowing measures.
		\item \textit{The Hilbert case $p=2$.} This is natural for smooth distributed corrections of elastic or acoustic medium parameters with minimal perturbation energy.
		\item \textit{Large exponents $p\gg2$.} In manufacturing optical waveguides and microwave resonators, the maximum change in refractive index or etching depth is limited by a technological tolerance $|\hat V(x)-V_0(x)|\le\Delta_{\max}$. Large finite exponents penalize large amplitudes more strongly, but guarantee neither exact minimization of the $L^\infty$ norm nor satisfaction of the prescribed tolerance. The hard constraint $|V-V_0|\le\Delta_{\max}$ should be imposed explicitly; existence of an admissible potential and optimality conditions for that problem require a separate analysis.
	\end{enumerate}
	
	\section{Conclusion}
	
	We have studied optimization inverse spectral problems for Sturm--Liouville operators throughout the Banach scale $L^p(0,1)$, $1<p<\infty$. The main results are as follows:
	\begin{enumerate}
		\item For every $1<p<\infty$, every reference potential $V_0\in L^p(0,1)$, and every strictly increasing finite set of eigenvalues $\Lambda^*=(\lambda_1^*,\dots,\lambda_m^*)^T$, we proved existence of a global OISP solution $\hat V\in\ISM_m^p(\Lambda^*)$; see Theorem~\ref{thm:main_lp}.
		\item We obtained an explicit analytic representation of the optimal potential in $L^p(0,1)$ through the dual power map applied to a linear combination of squared eigenfunctions,
		$\hat V(x)=V_0(x)-\left(\sum_{k=1}^m\alpha_k\varphi_{i_k}^2(\hat V)(x)\right)^{\langle\frac1{p-1}\rangle}$.
		For $p=2$, this reduces to the classical Ilyasov--Valeev representation \cite{IlyasovValeev2021}. We also derived the corresponding system of nonlinear differential equations with nonlinear exponent $\frac1{p-1}$.
		\item Using a unified method of differential identities for weighted sums of squared eigenfunctions and their derivatives, we proved that linear dependence of the squared eigenfunctions forces the potential to be expressible through those squares and to admit a real-analytic representative; see Lemma~\ref{lem:Sr_Tr_identities} and Theorem~\ref{thm:analytic_pot_dep}.
		\item For every $1<p<\infty$, we proved compactness of the set of global minimizers. For $p=2$, we established isolation of Morse nondegenerate stationary points and finiteness of the number of global minima when each of them is nondegenerate; see Lemma~\ref{lem:compact_sol_set} and Theorem~\ref{thm:isolated_finite}.
		\item We established unconditional global uniqueness of the OISP solution in $L^p(0,1)$ for the first eigenvalue when $\lambda_1^*\ge\lambda_1(V_0)$, uniqueness in an open neighborhood of the isospectral manifold for $p=2$, and generic global uniqueness in the Baire sense throughout $L^p(0,1)$ for arbitrary $m\ge1$; see Theorems~\ref{thm:uniq_lambda1}, \ref{thm:tubular_uniq}, and \ref{thm:generic_uniq_baire}.
		\item For $m=1,2$, we proved existence of at least two global minima in the presence of a sufficiently high finite symmetric barrier. We described numerical schemes with local applicability conditions and possible applications; no global convergence of the algorithms or physical realizability of an arbitrary optimal potential is asserted.
	\end{enumerate}
	

\end{document}